\documentclass[sigconf,nonacm,screen,natbib=false]{acmart}
\settopmatter{printfolios=true}

\usepackage[utf8]{inputenc}
\usepackage[T1]{fontenc}

\usepackage{microtype}

\usepackage{url}
\usepackage{hyperref}

\usepackage{subcaption}

\RequirePackage[
  datamodel=acmdatamodel,
  url=false,
  maxcitenames=2,
  maxbibnames=9,
  giveninits,
  ]{biblatex}

\addbibresource{refs.bib}
\addbibresource{refs-causal.bib}

\usepackage{multirow}
\usepackage{booktabs}

\usepackage[noline,noend,algo2e,linesnumbered]{algorithm2e} 

\RequirePackage{multicol}
\RequirePackage{amsmath, amsthm, mathtools}
\usepackage{amsfonts}

\input{mathdefs}

\newtheorem{definition}{Definition}[section]
\newtheorem{lemma}[definition]{Lemma}
\newtheorem{proposition}[definition]{Proposition}

\begin{document}

\SetInd{0.5em}{0.8em}

\SetKwFor{For}{for}{:}{}
\SetKwFor{If}{if}{:}{}
\SetKwFor{While}{while}{:}{}


\DontPrintSemicolon
\SetKwBlock{parameters}{parameters}{}
\SetKwBlock{constants}{constants}{}
\SetKwBlock{types}{types}{}
\SetKwBlock{state}{state}{}
\SetKwBlock{periodically}{periodically}{}
\SetKwBlock{on}{on}{}
\SetKwBlock{procedure}{procedure}{}
\SetKwBlock{function}{fn}{}
\SetKw{return}{return}
\SetKw{kwbreak}{break}
\SetKw{record}{record}
\SetKw{kwin}{in}
\SetKwFor{kwloop}{loop}{}{endloop}

\auxfun{causalSend}
\auxfun{bit}
\auxfun{receive}
\auxfun{send}
\auxfun{deliver}
\auxfun{trySend}
\auxfun{requestSlots}
\auxfun{add}
\auxfun{remove}
\auxfun{append}
\auxfun{enqueue}
\auxfun{dequeue}
\auxfun{peek}
\auxfun{first}
\auxfun{next}
\auxfun{size}
\auxfun{updatePositions}
\auxfun{sendInterval}
\newcommand{\M}{\mathds{M}}
\newcommand{\Msg}{\text{Msg}}
\newcommand{\Per}{\text{Per}}
\newcommand{\Rcv}{\text{Rcv}}

\title{Space-Optimal, Computation-Optimal, Topology-Agnostic,
Throughput-Scalable Causal Delivery through Hybrid Buffering}
\author{Paulo Sérgio Almeida}
\affiliation{%
  \institution{INESC TEC \& University of Minho}
  \country{Portugal}
}

\begin{abstract}
Message delivery respecting causal ordering (causal delivery) is one of the
most classic and widely useful abstraction for inter-process communication
in a distributed system.
Most approaches tag messages with causality information and buffer them at
the receiver until they can be safely delivered. Except for specific
approaches that exploit communication topology, therefore not generally
applicable, they incur a metadata overhead which is prohibitive for a
large number of processes.
Much less used are the approaches that enforce causal order by buffering
messages at the sender, until it is safe to release them to the network, as
the classic algorithm has too many drawbacks.
In this paper, first we discuss the limitations of sender-only buffering
approaches and introduce the Sender Permission to Send (SPS) enforcement
strategy, showing that SPS + FIFO implies Causal.
We analyze a recent sender-buffering algorithm, Cykas, which follows
SPS + FIFO, albeit very conservatively, pointing out throughput scalability
and liveness issues.
Then, we introduce a novel SPS + FIFO based algorithm, which adopts a new
hybrid approach: enforcing causality by combining sender-buffering to
enforce SPS and receiver-buffering to enforce FIFO. The algorithm overcomes
limitations of sender-only buffering, and achieves effectively constant
metadata size per message. By a careful choice of data-structures, the
algorithm is also computationally-optimal, with amortized effectively constant
processing overhead. As far as we know, there is no other topology-agnostic
causal delivery algorithm with these properties.
\end{abstract}

\maketitle

\section{Introduction}

Causal order message delivery (causal delivery, for short) is enforcing a
message delivery order which is consistent with potential causality, i.e., the
\emph{happened before} relation as defined
by~\textcite{DBLP:journals/cacm/Lamport78}. This allows defining
application-agnostic messaging algorithms that will respect any actual causal
dependencies resulting from application semantics.

Causal delivery has long been recognized~\cite{DBLP:journals/ipl/RaynalST91}
as an important concept with many applications. It stands out as the
sweet-spot, stronger than FIFO ordering, but not over-constraining, as total
ordering. Being based on \emph{happened before}, it is the strongest ordering
which is spatially scalable, given message propagation limited by light-speed:
unlike for total ordering, message delivery does not need to be delayed by
the occurrence of concurrent messages on the other side of the world. Causal
broadcast (sending to a set of processes) has an important role, to
implement highly available systems satisfying \emph{Causal
Consistency}~\cite{DBLP:journals/dc/AhamadNBKH95}, broadly the strongest
partition-tolerant consistency model.

Causal delivery was proposed~\cite{DBLP:journals/tocs/BirmanJ87} for the
special case of causal broadcast, in which each process broadcasts
messages to all $n$ processes in a group. In the initial proposal, causally
preceding messages were piggybacked with each sent message, which allows
immediate delivery, but is very inefficient in terms of communication
bandwidth, and was soon abandoned.
The general cases of unicast or multicast (where a set of destinations can be
specified at each send) render piggybacking predecessor messages unthinkable,
being algorithmically more demanding than broadcast.

Most early proposals, whether for
broadcast~\cite{DBLP:journals/tocs/PetersonBS89}, or
multicast~\cite{DBLP:conf/wdag/SchiperES89,DBLP:journals/ipl/RaynalST91}
enforce causal ordering by delaying delivery at the receiver. They keep
received messages in a buffer, until they are deemed safe to be delivered to
the process, i.e. until all causally preceding messages which are still
missing are received and delivered. The main advantage is being able to
minimize delivery latency. The big drawback is the need to tag messages with
metadata describing the causal past, to be able to identify which
messages are still missing and must be waited for.

Even for the broadcast case, metadata already requires $O(n)$ integers for the
canonical approach. It can be improved in practice by only sending
information about the immediate causal
predecessors~\cite{DBLP:journals/tocs/PetersonBS89}, but it is still
$O(n)$ in the worst case.

For the general case of unicast or multicast messaging, the canonical
Raynal-Schiper-Toueg (RST) causal delivery
algorithm~\cite{DBLP:journals/ipl/RaynalST91} has $n^2$ integers of message
metadata.
The algorithm by \textcite{DBLP:journals/jpdc/PrakashRS97} reduces message
metadata by only sending the immediate causal predecessors per destination, but
still uses an $n^2$ sized matrix in the process state.
Moreover, it has been shown~\cite{DBLP:journals/dc/KshemkalyaniS98} that
topology-agnostic unicast/multicast algorithms with receiver-buffering which
achieve optimal delivery latency need $\Omega(n^2)$ identifiers of message
metadata, with a metadata-optimal algorithm presented in the same paper. Even
if in practice only a fraction of the canonical $n^2$ integers are needed for
the optimal algorithm, from 5\% to 35\% in the many scenarios presented
by~\textcite{DBLP:journals/tpds/ChandraGK04}, having already 5\% of $100
\times 100$, i.e., 500 integers metadata per message, for a 100 processes
scenario, is arguably too much. For large scale scenarios with thousands of
processes, even the optimal receiver-buffering algorithm is completely
prohibitive.

Algorithms for large scale systems are possible, overcoming the need for
metadata in messages and the need to buffer and reorder messages at the
receiver, by making sure that messages are received already respecting
causal order.

One way to do that is by controlling the communication topology, by building
an overlay network that is used to carry messages.
The more obvious case can be summarized by
``FIFO + Tree -> Causal'', i.e., if messages are propagated along a
spanning tree, with each link being FIFO, then causal ordering ensues,
trivially. In this case, there is no need for message metadata or buffering
messages at the receiver before delivery. This strategy has been around for
a long time, at least since being presented and proved correct by
\textcite{DBLP:conf/hicss/BonaR96}, but has remained relatively little known,
having been rediscovered independently several times, e.g., for
geo-replication, in \textsc{Saturn}~\cite{DBLP:conf/eurosys/BravoRR17}, or by
\textcite{DBLP:conf/agere/BlessingCD17}, motivated by developing a runtime
for a distributed actor language.

But a tree is not suitable for large scale: the need for tree
reconfiguration upon node or link failures, bound to happen at scale, implies
complexity and delays; links near the network center become a bottleneck; long
propagation paths, with forwarding at each node by the messaging middleware,
impose substantial latency, compared with the routing at the network layer by
the optimal path.
\textsc{Saturn} itself, while for large geographic scales, is used for a small
number of nodes (datacenters plus some extra serializer nodes), and the tree
is used only for metadata propagation, with the bulk data being propagated
directly node-to-node (using network layer routing). While this strategy may be
suitable for large messages, where the bulk data takes some time to propagate,
that will not be the case for small messages. A tree topology will not be
suitable for large scale systems with thousands of processes and possibly
billions of small messages per second of aggregated throughput.

Exploiting communication topology towards reducing message metadata and
process state had already been done~\cite{DBLP:conf/podc/MeldalSV91} towards
determining potential causality between messages for the same destination. But
for causal delivery, being able to causally compare two given messages is not
enough: it needs the knowledge of whether there are still missing messages.

Exploiting topology for causal delivery (more generally than the tree case)
was achieved by Causal Separators ~\cite{DBLP:conf/icdcs/RodriguesV95}:
essentially, a set $S$ of nodes which separates the rest of the network in two
disjoint sets $A$ and $B$, where any message from a node in $A$ to a
node in $B$ must pass through a node in $S$. This can be exploited, e.g., when
all messages from processes internal to a datacenter to external processes must go
through one or more gateways, which serve as causal separators. But it does
not solve the problem of how to perform communication between possibly thousands of
processes within the datacenter itself, if internal communication topology is
arbitrary.

An alternative approach to ensure causal delivery, much less explored, is
buffering messages at the sender, delaying their release to the network (the
network-send) until it is safe to do so, ensuring that messages will arrive
respecting causal ordering, which avoids the need to buffer and reorder at the
receiver. A simple proposal by~\textcite{DBLP:conf/dagstuhl/MatternF94} (MF)
follows this approach, using a FIFO sender buffer, so as to be non-blocking,
but allowing a single unacked message in transit. (We abbreviate
acknowledgment, acknowledged and unacknowledged as ack, acked, and unacked.)
This is very limiting, almost synchronous, with too many drawbacks to be used
in practice.

But a recent paper by \textcite{TongKuper2024} proposes an algorithm (Cykas:
Can you keep a secret?) which follows the approach of buffering at the sender,
while introducing a new interesting ingredient: decoupling the delivery of a
message $m$ from the ability of its receiver to release to the network
subsequent messages, with the permission to do so granted by the sender of
$m$.
This allows earlier message release than the MF algorithm and several messages
in transit, decreasing latency and allowing more concurrency.

But as we discuss in this paper, any purely sender-buffering approach
will not be suitable, namely to achieve \emph{scalable throughput}, not even
allowing message pipelining. Cykas itself is not optimal in the
sender-buffering design space, and manifests several problems which
we discuss, namely the possibility of starvation.

In this paper we introduce a new strategy to enforce causal delivery which defines
a class of algorithms in which Cykas fits as a very conservative (non-optimal)
instantiation. We then introduce a new topology-agnostic causal delivery
algorithm, which is an optimal instantiation for the new class.
It uses an hybrid approach of enforcing causal delivery by having both
sender-buffering (delaying message release to the network) and
receiver-buffering (to reorder received messages from the same sender,
allowing pipelining). It follows the same ingredient as Cykas, in decoupling
delivery from the ability to do subsequent network-sends, but in a different
way, overcoming some problems in Cykas (namely starvation). It achieves
optimal metadata complexity (effectively constant size message and process
metadata per message) while also being computationally optimal, having
amortized constant-time operations, through the use of carefully designed
data-structures. While delivery latency is obviously not optimal, it allows
scalable throughput, unlike Cykas. For large scale scenarios, if
latency is less important than throughput and the cost of processing
each message should be minimized, it is arguably the best algorithm. It may in
fact be the only scalable choice if no assumption can be made regarding
topology, or used as a component in a topology-aware approach (e.g., within
each datacenter, scaling to thousands of processes, complemented with a
tree-based approach like \textsc{Saturn}, for inter-datacenter communication).

The contributions of this paper are:

\begin{itemize}
  \item Introduce criteria for comparing causal delivery algorithms, in
    particular those that are not latency-optimal, and perform a comparison of
    representative algorithms according with the criteria.
  \item Introduce a class of causal delivery algorithms
    based on what we call \emph{Sender Permission to Send (SPS)} enforcement
    strategy, show that SPS + FIFO implies Causal, analyze what seems to
    be the only sender-buffering algorithm which follows SPS + FIFO, albeit
    very conservatively (Cykas), show it to be non-optimal (for the class),
    and discuss its liveness problems.
  \item Introduce a novel SPS + FIFO causal delivery algorithm with an
    hybrid-buffering approach, using both sender- and receiver-buffering,
    which is topology-agnostic, space-optimal and computation-optimal, while
    allowing scalable throughput.
\end{itemize}

\section{Criteria for causal delivery algorithms}

\subsection{Space overhead}

The main criteria used to compare causal delivery algorithms has been the
space overhead: the overhead of metadata used for both messages in transit
and in the process memory. This is due to classic receiver-buffering
algorithms being similar in most other aspects, being latency-optimal and with
no throughput or liveness issues. But for other approaches, either using
sender-buffering or hybrid buffering, in addition to space overhead, other
criteria become relevant to their comparison.

\subsection{Delivery latency}

Delivery latency has not been explicitly used in comparisons, as most causal
delivery algorithms are latency-optimal. Optimality in this criteria has been
implicitly assumed almost as a requisite, with no tradeoff considered.
Optimality naturally ensues for the majority of algorithms, that do immediate
sends and receiver-buffering, while keeping precise knowledge about causal
dependencies. For them, all delivery latency is the minimum possible (except
algorithms which naively piggyback other causally preceding messages with each
individual message, unreasonable in practice). Delivery latency coincides with
the unavoidable causal latency~\cite{Ward2007AlgorithmsFC}.
Almost no receiver-buffering algorithm imposes additional latency
--- non-causal latency -- one exception being one algorithm which does not
keep precise knowledge, used in
Newtop~\cite{DBLP:conf/icdcs/EzhilchelvanMS95}, as we discuss below.

Sender-buffering algorithms, such as MF or Cykas, impose non-causal latency.
But considering only the delivery latency of each message may not provide the
full picture for comparing algorithms, specially those that are not
latency-optimal. Another important aspect is throughput in the system.

\subsection{Throughput}

There may be scenarios where the aim is to achieve the maximum amount of work
done given some computational resources, measured in messages per second
handled, i.e., throughput, with latency being less relevant.
System-wide causal delivery may be desired anyway, to avoid strange phenomena
which could occur otherwise. Scenarios may be from the simple case of a
processing pipeline up to a large DAG of message-based microservices, where
messages start from some sources and are processed along several steps, with
a mix of dispatching to the appropriate service and load-balancing.
For the case of clients doing RPCs to a service, each client being limited by
latency, we want to be able to scale aggregated throughput, i.e., be able to serve
a number of clients which scales with computational resources, regardless of
network latency.

As we discuss below, sender-buffering algorithms not only lead to non-causal
latency but also are easily prone to fail in being scalable in what
concerns throughput. A good criteria is whether the throughput can scale
with computational power or if it is limited by network latency between
processes. A good algorithm should have \emph{spatially scalable throughput},
where throughput is not hindered by the spatial distance between nodes. The
algorithm presented in this paper allows it, unlike sender-only buffering algorithms.

A simple metric that may give a hint about the achievable throughput and
help compare different algorithms is the number of messages in transit allowed
(per sender). MF allows a single message in transit, Cykas allows at most
one message per destination in transit, while our algorithm allows several
messages per destination. (In addition to allowing earlier network-sends in
general.)

\subsubsection{Pipelining}

A special but important case is pipelining. To achieve high throughput
regardless of network latency, it is essential to be able to have several
messages in transit between each pair of processes. This implies that
algorithms for the general case of networks which are unreliable or do not
ensure FIFO will need receiver-buffering to allow pipelining. Therefore, even
if only aiming for algorithms that are space-optimal and with no concern for
delivery latency, algorithms using purely sender-buffering must be ruled out.
This is a strong motivation for a new class of hybrid buffering
algorithms -- having both sender- and receiver-buffering -- which we introduce
in this paper.

\subsection{Liveness}

The possibility of deadlocks for blocking-sends and synchronous
communication was discussed by~\textcite{DBLP:conf/dagstuhl/MatternF94},
namely the scenario where each process is waiting for the ack of the delivery of
its own message and does not accept the message from the other process. This
was one motivation for their introduction of sender-buffering. Less
obvious liveness issues may arise in blocking algorithms even using
asynchronous communication, such as discussed
by~\textcite{DBLP:journals/tc/AnastasiBG04} regarding the algorithm
by~\textcite{10.1109/12.737681}.

But even non-blocking algorithms using sender-buffering can have problems
regarding progress such as the occurrence of starvation in some scenarios.
This may not be immediately obvious when designing an algorithm, as we
illustrate below for Cykas. Our algorithm does not suffer from liveness
problems.

\subsection{Computation time}

Delivery algorithm computation time has been mostly ignored, assumed to be a
negligible overhead, compared with the actual processing of messages
themselves. But if considering the scalability of classic algorithms to
thousands of processes, computation time may become non-negligible.
Classic latency-optimal algorithms have $O(n^2)$ computation overhead, becoming
non-negligible at large scale, specially in scenarios involving a simple
handling task per message. Sender- or hybrid-buffering allow algorithms with
much better computation complexity.

\section{Current topology-agnostic approaches to causal delivery}

We now compare briefly the two main approaches to causal delivery --
receiver-buffering and the less explored sender-buffering -- with our novel
hybrid approach. We only consider strategies that meet the following criteria:
\begin{itemize}
  \item A causal-send only network-sends the message itself plus
    metadata. We leave out strategies that piggyback with each message
    other causally preceding messages~\cite{DBLP:journals/tocs/BirmanJ87}, too
    inefficient and requiring garbage collection of messages to avoid
    unbounded growth.
  \item They have non-blocking sends, i.e., where the causal-send operation
    returns immediately. Approaches having a blocking send, e.g., blocking
    until the previous one has been acknowledged as
    delivered~\cite{10.1109/12.737681}, will prevent the process from
    progressing for some time. Non-blocking sends let the process progress, to
    deliver messages and perform work. This requires either immediate
    network-sends (most approaches, that assume reliable networks) or having a
    send buffer to hold the message before a network-send can be safely
    performed.
\end{itemize}

The results from analyzing the most representative algorithms are summarized
in Table~\ref{tab:comparison}, which shows: the network assumptions regarding
faults and order; space overhead and computation time (in terms of number of
integers or identifiers, as usually presented for causal delivery algorithms,
not in terms of bit complexity); whether an algorithm is latency-optimal or
where non-causal latency is caused (sender or receiver) and whether it is
unbounded or its magnitude as a function of the RTTs (round-trip time) between
processes (sender of the message to its peers); whether an algorithm allows scalable
throughput and the limit of messages in transit per sender; whether liveness
is assured; and whether an algorithm requires extra control messages.

\begin{table*}[t!]
  \caption{Comparison of topology-agnostic approaches to causal delivery,
  for representative sender-buffering, receiver-buffering and hybrid
  approaches. Space complexity in terms of integers/identifiers (not bit
  complexity). Considering number of processes ($n$), indegree $(i)$ /
  outdegree $(o)$ of the process (i.e., number of senders to the process and
  receivers), messages buffered $(b)$, permits missing $(p)$, unacked messages
  $(u)$, and the RTTs from sender to its peers $(r)$.}
  \label{tab:comparison}
  \begin{center}
    \begin{tabular}{@{}rlcccccc@{}}
  \toprule
    & & \multicolumn3c{receiver-buffering} & \multicolumn2c{sender-buffering}
      & \multicolumn1c{hybrid} \\
   \cmidrule(l){3-5}
   \cmidrule(l){6-7}
   \cmidrule(l){8-8}
    & & RST~\cite{DBLP:journals/ipl/RaynalST91}
      & KS~\cite{DBLP:journals/dc/KshemkalyaniS98}
      & Newtop~\cite{DBLP:conf/icdcs/EzhilchelvanMS95}
      & MF~\cite{DBLP:conf/dagstuhl/MatternF94}
      & Cykas~\cite{TongKuper2024}
      & this paper \\
  \midrule
      \multirow2*{network}
      & failures & reliable & reliable & reliable & any & reliable & any \\
      & order & any & FIFO & FIFO & any & any & any \\
  \midrule
      \multirow2*{space}
      & message & $O(n^2)$ & $O(n^2)$ & $O(1)$ & $O(1)$ & $O(1)$ & $O(1)$ \\
      & process & $O(n^2 b)$ & $O(n^2 b)$ & $O(n + b)$ & $O(b)$ & $O(no + b)$
      & $O(i+o+b+p+u)$ \\
  \midrule
      \multirow2*{computation}
      & sender & $O(1)^1$ & $O(n^2)$ & $O(1)$ & $O(1)$ & $O(no)$ & $O(1)$ \\
      & receiver & $O(n^2)$ & $O(n^2)$ & $O(n)$ & $O(1)$ & $O(1)$ & $O(1)$ \\
  \midrule
      \multirow2*{latency}
      & optimal & yes & yes & no & no & no & no \\
      & \multirow2*{non-causal}& no & no & receiver & sender & sender & sender \\
      & & 0 & 0 & unbounded $^2$ & $O(b r)$ & unbounded & $O(r)$ \\
  \midrule
      \multirow2*{throughput}
      & scalable & yes & yes & yes & no & no & yes \\
      & in transit & unlimited & unlimited & unlimited & 1 & 1/receiver & unlimited \\
  \midrule
      \multicolumn2c{liveness}
      & yes & yes & no $^2$ & yes & no & yes \\
  \midrule
      \multicolumn2c{extra control messages}
      & no $^3$ & no $^3$ & no $^3$ & \textsc{ack} & \textsc{ack,yct} &
      \textsc{ack,permit} \\
      \bottomrule
      \multicolumn8l{\footnotesize{$^1$ Considering only
      the algorithm steps and ignoring the cost of serializing the matrix,
      which is $O(n^2)$.}} \\
      \multicolumn8l{\footnotesize{$^2$ To overcome it, the
      service sends null messages if a process becomes idle for some time.}} \\
      \multicolumn8l{\footnotesize{$^3$ These algorithms
      assume a reliable network. For unreliable
      networks ACKs will be needed at the transport layer.}}
    \end{tabular}
  \end{center}
\end{table*}

Because in a large scale system with thousands of processes and many different
services, each process will typically communicate with only a very small
fraction of the whole, it is relevant for scalability that space overhead is
not affected by the total number of processes $n$ but only by the direct
peers. For space overhead at a process we distinguish, for the more scalable
algorithms, the indegree $i$ (number of senders to the process) and
outdegree $o$ (number of receivers). Algorithms that propagate information
transitively (classic ones like RST and KS) will have a space overhead that
depends on the total number of processes, having poor scalability.
Our algorithm has constant space per message in transit, constant computation
time (amortized), and process space that depends only linearly on individual
parameters, never on products of several parameters nor on the total number of
processes, unlike the other algorithms, except MF (which does not allow
scalable throughput). The process space complexity for our algorithm includes
the number of already network-sent but still unacked messages $u$, which is
required for any algorithm that tolerates message loss.

\subsection{Receiver-buffering}

Receiver-buffering algorithms use a buffer to hold received messages until
they are deemed safe to be delivered, either due to knowledge about still
missing causal predecessors that must be delivered first, or due to
uncertainty from the lack of knowledge.

The canonical RST~\cite{DBLP:journals/ipl/RaynalST91} algorithm stores at each
process an $n \times n$ matrix $M$, which describes the process knowledge about
all (transitive) causal dependencies of subsequent messages. $M[i,j] =
c$, means that $c$ messages were sent from process $i$ to process $j$ in the
causal past. The matrix is piggybacked as metadata with each message, and kept
after the message is received, while being buffered before delivery. The space
cost is therefore $O(n^2)$ integers per message, and $O(n^2b)$ at the process
when $b$ messages are buffered. (The algorithm also keeps a vector tracking
the number of delivered messages per origin.) Algorithmic processing cost is
$O(n^2)$ at the receiver, and constant time at sender (but serialization of
the matrix will have $O(n^2)$ cost, even if it may be negligible in practice).
This algorithm is latency-optimal, with no non-causal latency imposed.

For the class of latency-optimal algorithms, the
KS~\cite{DBLP:journals/dc/KshemkalyaniS98} algorithm is metadata-optimal: it
sends in messages and keeps at processes the minimum metadata so as to ensure
causal delivery as soon as possible.  Described for a multicast API, it
essentially keeps and propagates in messages, for each still relevant message
in the causal past, the set of remaining destinations for which it is not
known whether the message has been delivered or will be delivered respecting
causal order. Information is discarded as it becomes redundant, such as a
remaining destination if there is a subsequent message to the same
destination, or an entry with an empty set of remaining destinations if there
is a subsequent message from the same sender. Space space complexity (both per
message in transit and per message buffered to be delivered is $O(n^2)$.  Even
though space overhead may be much less in
practice~\cite{DBLP:journals/tpds/ChandraGK04}, for common scenarios
resembling something like $O(n^k)$ with $k$ closer to 1 than 2, it is always
larger than $\Omega(n)$, as at least an entry per sender is kept.  It is
therefore unsuitable for large scale.

Newtop~\cite{DBLP:conf/icdcs/EzhilchelvanMS95} has a receiver-buffering
algorithm for causal broadcast which requires less metadata by making
conservative decisions regarding delivery being safe. It delivers later than
the optimal due to the lack of knowledge. It assumes reliable FIFO
communication, and uses Lamport clocks~\cite{DBLP:journals/cacm/Lamport78} to
tag messages and keeps a vector of $n$ integers, with the clock of the last
received message from each process. Received messages with clock up to the
minimum of the $n$ integers can be safely delivered, in clock order. This
strategy achieves in effect a total order broadcast compatible with causal
delivery (if ties are broken, e.g., using process ids). But it causes unbounded
non-causal latency, as no message can be delivered until a message is received
from every process, being unsuitable in practice, even for causal broadcast,
and not at all suitable for general unicast, where each process may only
exchange messages with a subset of all processes (and it is not known a priori
which ones).

\subsection{Sender-buffering}

By sender-buffering algorithms we mean those that buffer messages requested to
be causally sent until it is safe for them to be network-sent (released to the
network by an actual send operation of the transport layer).
Almost no such algorithms exist. (We do not classify an algorithm as
sender-buffering, in what concerns causal delivery, if the algorithm
assumes an unreliable network and has a buffer to hold messages, already
network-sent, for possible retransmission until acknowledged. Variants of such
algorithms assuming a reliable network would not have a buffer.)

The classic representative of sender-buffering is the
MF~\cite{DBLP:conf/dagstuhl/MatternF94} algorithm. It has a FIFO send buffer,
to which messages requested to be causally sent are enqueued. This allows
non-blocking causal-sends, which return immediately, allowing the process to
perform subsequent work, possibly to process messages subsequently delivered.
The algorithm network-sends a single message at a time, waiting for
the acknowledgment that it has been delivered before dequeuing and
network-sending the next message from the FIFO send buffer. At the receiver a
message can be delivered immediately.

In fact, the algorithm is presented for a blocking receive API, using a
receive FIFO buffer from which messages are dequeued. But this buffer plays no
role in deciding delivery order, and is not needed for an API where delivery
is a triggered event.  The same applies to any algorithm described with a
blocking receive API.  Therefore, we classify MF as a purely sender-buffering
algorithm, in what causal delivery is concerned.


The algorithm ensures causal delivery trivially, as all causal dependencies
are ensured to be delivered before any message causally in the future is
network-sent. It has optimal metadata overhead per message, in transit and in
memory, and no computation cost, but is not adequate to practical use, even
for small scale.
As it allows a single message per sender in transit, the throughput is limited
by network latency. It is of little use to have a non-blocking send when
messages will accumulate unboundedly at the send buffer if the causal-send
rate exceeds the inverse of the average RTT to the destination. In addition to
non-scalable throughput and the buffer possibly growing unboundedly, the
delivery latency of a message will grow with the product of the average RTT by
the number of messages already in the buffer when the causal-send is
requested.

The very poor scalability properties of the MF algorithm has possibly
made sender-buffering algorithms not much further pursued. But a recent
algorithm, Cykas~\cite{TongKuper2024}, introduces a novel ingredient.
decoupling the delivery of a message $m$ from the ability of its receiver to
release to the network subsequent messages, with the permission to do so
granted by the sender of $m$. (In MF, the ability to perform a network-send
only depends on previous sends having been acked, not on received messages.)
The approach in Cykas allows earlier network-sends than the MF algorithm and
several messages in transit, decreasing latency and allowing more concurrency.

But by not having a receive-buffer, Cykas allows a single message per
destination in transit (per sender). This means that concerning the
interaction between two processes (e.g., in a processing pipeline), Cykas has
the same limitations as MF, with throughput not scaling with processing power,
but being limited by network latency. The Cykas algorithm is also sub-optimal
concerning how it enforces an implicit invariant it depends on to ensure
causal delivery, being conservative in deciding when to allow network-sends,
due to insufficient knowledge, as we explain below.

\section{A new approach to achieve causal delivery}

We now introduce a strategy to enforce causal delivery which defines a new
class of algorithms that generalizes what Cykas does.
Cykas fits as a very conservative (non-optimal) instantiation of the strategy,
our basic algorithm a less conservative instantiation, and our optimized
algorithm makes it latency-optimal for this class of algorithms.

\subsection{Causal delivery and happens-before between application messages}

The Lamport happened before relation is defined over sends and receives of
messages in a distributed system. But for causal delivery we are interested
only in the potential propagation of information from and into the processes
using the messaging abstraction, and not the happened before involving
(network-)sends and receives of all protocol messages. In a hybrid-buffering
algorithm, in addition to events concerning control messages, we have four
kinds of events concerning application messages: causal-send (c), network-send
(s), receive (r), and deliver (d). Only causal-send and delivery of messages
are visible to the application; the others are internal protocol events.
Happens-before ($\rel{hb}$) between application messages $m_1$ and $m_2$ is
defined as the transitive closure of:

\begin{enumerate}
  \item $m_1 \rel{hb} m_2$ if d$_i(m_1)$ occurs before c$_i(m_2)$ at a
    process $i$ that sent $m_2$;
  \item $m_1 \rel{hb} m_2$ if c$_i(m_1)$ occurs before c$_i(m_2)$ at a process
    $i$ that sent both $m_1$ and $m_2$.
\end{enumerate}
A causal delivery algorithm ensures that for any process $i$ that delivers
two messages $m_1$ and $m_2$:
\[
  m_1 \rel{hb} m_2 \text{ implies that d}_i(m_1) \text{ occurs before d}_i(m_2).
\]

Happens-before between messages is illustrated in
Figure~\ref{fig:happens-before}, showing that it is the relative occurrence
between delivery and a subsequent causal-send that defines happens-before. The
receive events do not matter, as the information has not yet reached the
process. Network-sends also do not matter, as information flows from the
process when causal-sent, being irrelevant whether the message is still in
the send-buffer.

\begin{figure}
\begin{center}
  \includegraphics[scale=1.2]{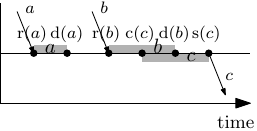}
\end{center}
  \caption{Happens-before between application messages. Some process receives (r), buffers, and then
  delivers (d) message $a$, and similarly for message $b$. The
  process causal-sends (c) message $c$, keeps it in the send-buffer,
  and then network-sends (s) it.
  We have $a \rel{hb} c$ but $b \not\rel{hb} c$,
  because, even if $b$ is received before, it is not
  delivered before $c$ is causal-sent. It does not matter that $b$ is
  delivered before $c$ is network-sent.
  }
  \label{fig:happens-before}
\end{figure}

\subsection{Sender permission to send enforcement strategy}

\begin{definition}[Sender Permission to Send (SPS) enforcement strategy]
A process $i$ only network-sends a message $m$ to a process $j$ when all
messages that happened before $m$, sent to any process other than
$i$, by any process $k \neq j$ that sent messages to $i$, have been delivered.
\end{definition}

If all processes enforce SPS (namely, those who sent messages to $i$),
transitive dependencies are enforced, as other messages that happened before
$m$, sent by any other process, must have been delivered.

\begin{proposition}
\label{prop:fifo-sps-causal}
An algorithm which enforces FIFO and SPS ensures causal delivery.
\end{proposition}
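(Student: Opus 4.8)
The plan is to fix a process $j$ and two messages $m_1, m_2$ that are both delivered at $j$ with $m_1 \rel{hb} m_2$, and to show that the delivery event $d_j(m_1)$ precedes $d_j(m_2)$. Since $\rel{hb}$ is the transitive closure of the two base relations, I would first dispose of the two single-step cases. Let $p$ be the sender of $m_2$. If $m_1$ and $m_2$ are both sent by $p$ (base relation~2), then $c_p(m_1)$ precedes $c_p(m_2)$; because the send buffer is FIFO, $p$ network-sends $m_1$ to $j$ before $m_2$, and because the channel $p \to j$ is FIFO the deliveries at $j$ keep that order. If instead $m_1$ is delivered at $p$ before $c_p(m_2)$ (base relation~1), then $m_1$ was sent to $p$ by some $k \neq p$, so $k$ is a sender to $p$; provided $k \neq j$ and $j \neq p$, the message $m_1$ is a predecessor of $m_2$ that is sent to a process other than $p$ by a sender to $p$, so SPS forces $m_1$ to be delivered (in particular at $j$) before $p$ network-sends $m_2$ to $j$. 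As a network-send precedes its delivery, $d_j(m_1) < s_j(m_2) < d_j(m_2)$.

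The crux is the transitive case, for which I would isolate the following lemma and prove it by induction on the length of the $\rel{hb}$-chain: whenever a process $a$ network-sends a message $g$ to $b$, every $m'$ with $m' \rel{hb} g$, $j \in \mathrm{dest}(m')$ and $\mathrm{sender}(m') \neq a$ has already been delivered at $j$ (modulo the edge cases below). Causal delivery then follows by applying the lemma with $g = m_2$, $a = p$, $b = j$: either $m_1$ is sent by $p$, handled by FIFO as above, or $d_j(m_1) < s_j(m_2) < d_j(m_2)$. For the inductive step I would trace how the causal information about $m'$ reached $a$: along the chain from $m'$ to $g$, take the last message $\hat m$ that is delivered at $a$, so that $\hat m$ reaches $g$ through $a$'s own program order with $d_a(\hat m) < c_a(g)$, and either $\hat m = m'$ or $m' \rel{hb} \hat m$ is a strictly shorter chain. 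Writing $k' = \mathrm{sender}(\hat m)$, note $k'$ is a sender to $a$.

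If $\mathrm{sender}(m')$ is already a sender to $a$ — which in particular covers $\hat m = m'$ and, crucially, the case $\mathrm{sender}(m') = k'$ (since $\hat m$ came from $k'$ into $a$) — then SPS at the send of $g$ forces $d_j(m') < s_b(g)$ directly. Otherwise $\mathrm{sender}(m') \neq k'$, and applying the induction hypothesis to the network-send of $\hat m$ by $k'$ to $a$ yields $d_j(m') < s_a(\hat m) < d_a(\hat m) < c_a(g) \le s_b(g)$. Chaining these inequalities up the recursion gives $d_j(m_1) < s_j(m_2)$, closing the argument; the observation that the coincidence $\mathrm{sender}(m') = k'$ is itself a direct SPS case is what guarantees the recursive hypothesis $\mathrm{sender}(m') \neq k'$ always holds when the recursion actually proceeds, and it is also what lets FIFO at a shared sender cooperate cleanly with SPS.

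I expect the main obstacle to be exactly this composition: SPS is a purely local guard, making a process wait only for predecessors whose sender sent directly to it, so the entire weight of the proof falls on showing that enforcing it at every relay composes into the global guarantee, and the correctness of the induction hinges on the two exclusions in the SPS definition being harmless. I would therefore spend the most care verifying that no needed predecessor is ever dropped: a predecessor destined only to the relaying process (excluded by ``sent to a process other than $i$'') is automatically delivered there before it is forwarded, while a multicast predecessor destined also to $j$ is \emph{not} excluded and is still waited for; and a predecessor sent by the current destination (excluded by $k \neq j$) is one whose order at any common destination is fixed by that destination's own FIFO and program order rather than by waiting. The remaining loose ends are the degenerate self-send situations ($\mathrm{sender}(m') = j$, or a process sending to itself), which I would treat separately by appealing to the local causal-send order at $j$.
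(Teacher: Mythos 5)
Your proof is correct in substance but takes a genuinely different route from the paper's. The paper's proof is forward-chaining: it unfolds $m_1 \rel{hb} m_2$ into a process path $p_0 \to p_1 \to \cdots \to p_k = i \to j$, applies SPS exactly once --- at the \emph{first} relay $p_1$, the process that received a message from $m_1$'s sender $p_0$ --- to conclude that $m_1$ is delivered before $p_1$'s network-send, and then lets real-time transitivity push that fact down the rest of the path; there is no induction. Your proof is a backward induction on the chain: peel off the last message $\hat m$ delivered at the final sender, and apply SPS at the \emph{latest} relay having $m_1$'s sender among its direct senders, recursing otherwise. What your version buys is robustness exactly where the paper is silent: SPS at $p_1$ only covers messages whose sender differs from $p_1$'s destination $p_2$, so if $p_0 = p_2$ the paper's key sentence does not apply as stated, and the conclusion is rescued only by re-cutting the path or by SPS firing at a later relay to which $p_0$ also sent --- precisely the relay your Case A finds. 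Your observation that the coincidence $\mathrm{sender}(m') = k'$ is itself a Case A instance (so the hypothesis $\mathrm{sender}(m') \neq k'$ is always available when Case B recurses) is the right key; note also that your setup gives you, almost for free, that the ``$k \neq j$'' destination exclusion can only bite at the top level, because each child instance's destination is its parent's relay, which the parent's hypothesis already rules out as the sender of $m'$. What the paper's version buys is brevity, at the cost of leaving these coincidences implicit.

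One step of yours would fail as written: the claim that a predecessor destined only to the relaying process is ``automatically delivered there before it is forwarded.'' The $\rel{hb}$-chain from $m'$ to $g$ need not pass through that relay's delivery of $m'$ at all --- it can leave through the sender's own program order (base relation~2) while $m'$ is still in transit to the relay --- so nothing automatic orders that delivery before the relay's network-send. The case is still harmless, but for reasons internal to your own framework: if $\mathrm{sender}(m') = k'$, then $m'$ and $\hat m$ are messages from the same sender to the same destination with $m' \rel{hb} \hat m$, hence causal-sent in that order, and FIFO gives $d_a(m') < d_a(\hat m) < c_a(g) \le s(g)$; if $\mathrm{sender}(m') \neq k'$, fall back on the induction hypothesis applied to the network-send of $\hat m$ by $k'$, which is applicable since $m'$'s sole destination is $a \neq k'$, so this exclusion cannot recur at the child instance. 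With that repair, and the top-level self-send degeneracies set aside as you do, your induction closes.
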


\begin{proof}
Given message $m$ sent by process $i$ to some process $j$, causal predecessors
of $m$ that must be delivered before $m$ at $j$ are either:
1) messages from $i$ itself, which will be delivered before $m$ due to FIFO, or
2) some message $m'$ sent to $j$ by some process $p_0$, that happened before
$m$ due to a causality path involving a set of processes
$\{p_0, p_{1}, \ldots, p_k = i\}$, for some $k > 0$, with each process $p_l$,
for $l < k$, sending a message to process $p_{l+1}$, and process $p_k = i$
sending $m$ to $j$.
As process $p_1$ enforces SPS, it will network-send the next message along
the path ($m$ if $k = 1$) only after $m'$ has been delivered.
This implies that all network-sends along the path, by any process
$p_l$, for $1 \leq l \leq k$, including process $i$ network-sending $m$, will
happen after $m'$ has been delivered.
\end{proof}

We now present a conservative variant of SPS that still allows algorithms with
scalable throughput,
implemented by our basic algorithm in Section~\ref{sec:basic-algorithm}.
An optimized algorithm which enforces SPS, allowing better delivery latency in
some cases is presented in Section~\ref{sec:optimizized-algorithm}.

\begin{definition}[Conservative Sender Permission to Send (CSPS) enforcement
strategy]
A process $i$ only network-sends a message $m$ when all messages that happened
before $m$, sent by any process that sent messages to $i$, have been delivered.
\end{definition}

\begin{proposition}
\label{prop:fifo-csps-causal}
An algorithm which enforces FIFO and CSPS ensures causal delivery.
\end{proposition}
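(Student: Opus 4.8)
The plan is to reduce CSPS to SPS. I would argue that the CSPS condition is strictly stronger than the SPS condition, so that any algorithm enforcing FIFO and CSPS automatically enforces FIFO and SPS, whence Proposition~\ref{prop:fifo-sps-causal} immediately yields causal delivery. This is the cleanest route because it reuses the substantive work already done, and it matches the text's description of CSPS as a ``conservative variant'' of SPS: conservative should mean the process waits for \emph{at least} as much as SPS requires before network-sending.

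\textbf{Key steps.} First I would fix a process $i$ about to network-send a message $m$, and compare the two sets of messages whose prior delivery each strategy demands. SPS requires delivery of all messages that happened before $m$, sent to any process other than $i$, by any process $k \neq j$ that sent messages to $i$. CSPS requires delivery of all messages that happened before $m$, sent by any process that sent messages to $i$. The second step is to observe that the CSPS set contains the SPS set: CSPS drops the two restrictions present in SPS (the exclusion of destination $i$, and the exclusion of the sender $j$), so every message that SPS insists be delivered is also a message that CSPS insists be delivered. Hence if a process has satisfied CSPS before network-sending $m$, it has \emph{a fortiori} satisfied SPS. The third step is the one-line conclusion: FIFO~$+$~CSPS therefore implies FIFO~$+$~SPS, and by Proposition~\ref{prop:fifo-sps-causal} this implies causal delivery.

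\textbf{Main obstacle.} The only delicate point is verifying that the set inclusion really holds under the precise quantifiers, since the two definitions differ in exactly which messages they mention, and a careless reading could suggest CSPS and SPS are merely incomparable rather than nested. In particular I would double-check the role of the destination: SPS is stated per target $j$ and excludes messages sent \emph{to} $i$ and messages sent \emph{by} the target-reaching process $j$, whereas CSPS is stated without reference to any particular target $j$ at all and imposes a single uniform requirement for all of $m$'s network-sends. I would confirm that removing a constraint can only enlarge the set of required-delivered messages, so the inclusion is in the safe direction; if instead CSPS somehow failed to subsume some message SPS requires, the reduction would break and one would have to reprove causal delivery from scratch along the lines of Proposition~\ref{prop:fifo-sps-causal}. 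I expect the inclusion to go through, making this a short corollary-style argument rather than a fresh induction over causality paths.
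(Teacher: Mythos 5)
Your proof is correct and takes essentially the same route as the paper: the paper's own proof likewise reduces CSPS to SPS, observing that a process enforcing CSPS only network-sends when all conditions required by SPS are met, and then invokes Proposition~\ref{prop:fifo-sps-causal}. Your explicit verification of the set inclusion (CSPS's required-delivered set contains SPS's, since CSPS drops the destination-$\neq i$ and sender-$\neq j$ restrictions) is simply a more detailed rendering of that same one-step reduction.
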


\begin{proof}
By their definitions, an algorithm which enforces CSPS also enforces SPS, as
a process enforcing CSPS only network-sends a message when all conditions
required for SPS are met.
If it also enforces FIFO, by Proposition~\ref{prop:fifo-sps-causal} it
ensures causal delivery.
\end{proof}

It should be noted that SPS, CSPS and the propositions above are quite
general, allowing different algorithms, such as ours. Cykas uses an even more
conservative strategy than CSPS. To help understand SPS/CSPS-based
algorithms, consider some message $a$, and the stable property:
$D^a(t)$ = [messages sent before $a$ by the same sender have been delivered
before time $t$].
We can consider three different times related to $D^a(t)$:
\begin{description}
  \item[$O^a$]: the (smallest) time $t$ when $D^a(t)$ became true (known
    by an omniscient observer).
  \item[$S^a$]: the time when the sender of $a$ learns that $D^a(t)$ became
    true.
  \item[$R^a$]: the time when the receiver of $a$ learns that $D^a(t)$
    became true.
\end{description}

Figure~\ref{fig:dependencies} shows a run involving 6 processes, in which
process $i$ has causally-sent message $m$ (event c($m$)) after receiving
and delivering messages $a$ and $b$ (shown in solid bold) from processes $j$
and $k$, respectively. Network-sent messages are shown by solid lines, ack
control messages by dotted lines and what we call permission control messages
(or simply permits) by dashed lines.

\begin{figure}
\begin{center}
  \includegraphics[scale=1]{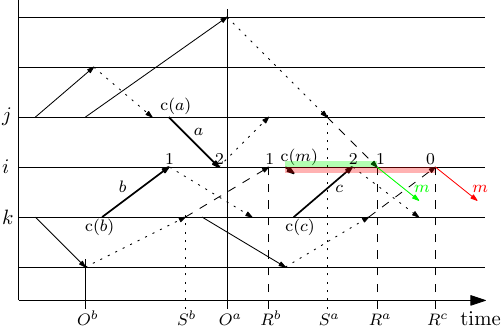}
\end{center}
  \caption{Spacetime diagram. Process $i$ causal-sends message $m$ after
  receiving and delivering message $a$ (from process $j$) and $b$ (from
  process $k$). Message $m$ is kept in the send-buffer until it can be
  network-sent, when $i$ knows that all causal predecessors sent by either $j$
  or $k$ have been delivered. Optimal time in the buffer shown in green; time
  taken by Cykas shown in red.}
  \label{fig:dependencies}
\end{figure}

The figure shows in the time axis, for messages $a$ and $b$, the times
corresponding to $O^a$, $S^a$, $R^a$, $O^b$, $S^b$, and $R^b$. Time $O^a$ is
when all messages sent by $j$ before $a$ have been delivered, time $S^a$
when all acks of those messages have been received by $j$, and $R^a$ when the
receiver of $a$, i.e., process $i$ has been informed (by a permission
control message). 

Consider the class of possible causal delivery algorithms based on CSPS,
using normal messages and these two kinds of control messages (acks and
permits). In an optimal algorithm, process $i$ should be able to keep $m$
in the send-buffer only until time $R^a$: the earliest time $t$ when $i$ could
learn that $D^a(t) \land D^b(t)$ becomes true, i.e., for all messages ($a$ and
$b$) delivered by $i$ before causally-sending $m$.  For the optimal algorithm,
$m$ should be in the send-buffer as depicted by the green rectangle (above the
timeline for process $i$).

\subsection{Analysis of the Cykas algorithm}

Cykas is a sender-buffering algorithm where nodes can immediately deliver
messages received but are restricted from network-sending causally subsequent
messages until given permission by their sender. Essentially, Cykas works
according to the following rules:

\begin{enumerate}
  \item There is a FIFO send-buffer where messages requested to be
    causally-sent are enqueued, no receive-buffer, and messages can be
    delivered immediately, upon being received.
  \item For each destination, a message is network-sent only after any
    previous message to the same destination has been acked.
  \item Messages are network-sent either as a normal-send (if all previously
    sent messages have already been acked) or as an eager-send, otherwise.
  \item At any time, a process is either in normal mode or secret mode, having
    an integer state variable MODE. Greater than zero means secret mode.
  \item Receiving an eager-send increments MODE, transitioning to, or
    remaining in secret mode.
  \item Messages causally-sent remain buffered while the process is in secret
    mode. In normal mode they can be network-sent (if rule 2 is also
    satisfied).
  \item If a process eager-sends a message $m$, it will subsequently send a
    YCT (you-can-tell) control message to the same receiver when all messages
    it sent before $m$ have been acked.
  \item When a process receives a YCT control message, it decrements MODE
    (transitioning to normal mode if MODE becomes zero).
\end{enumerate}

Not only is Cykas conservative in enforcing FIFO, by not having a
receive-buffer, but Cykas lacks precise knowledge and will be overly
conservative in enforcing CSPS.
Cykas has a process-wide state, keeping a single MODE variable for
the whole process, incremented when an eager-send is received, and decremented
by a YCT. In the run in Figure~\ref{fig:dependencies}, processes $j$ and $k$
only send. As they never receive messages, they can network-send immediately
upon a causal-send (except if it was a message to the same destination before
the ack of the previous message, which never happens in this run).
Also, all messages network-sent to $i$ are by eager-sends (as acks are still
missing).
The timeline for process $i$ shows the value of the MODE variable, which is
increased upon receiving $b$ and $a$, and decremented by their respective
YCTs (dashed lines).

But in this run another message ($c$) from $k$ is eager-sent to $i$, and
happens to arrive before the YCT regarding $a$, preventing MODE from reaching
zero, and $m$ to be network-sent, until later, when the YCT for $c$ arrives.
This breaks optimality, as message $c$ did not happen before $m$, and
should be irrelevant for the decision to network-send $m$. Unfortunately it
causes Cykas to network-send later, being $m$ buffered in the interval shown
by the red rectangle (below the timeline for process $i$).

Worse than some unnecessary delay, this issue causes a liveness problem in
Cykas. Consider similar runs, in which at least two processes keep sending
messages to $i$, always as eager-sends (something possible if they also
alternate, sending messages to other processes, with some acks always
missing at the send time). It can happen that, before some YCT still missing
arrives, another eager-sent message arrives, preventing MODE from reaching
zero. This can go on, in an infinite run, causing starvation in the ability of
$i$ to network-send $m$, making $m$ (as well as any other message subsequently
causal-sent by $i$) remain in the send-buffer forever.

The problem lies in the lack of precise information by Cykas, collapsing in a
single process-wide MODE integer information both about relevant messages,
that happened before $m$, with information about irrelevant concurrent
messages.
Cykas also has two other weak points. One is that YCTs carry no
identification and imply the decrement of a counter, which is not an
idempotent operation. This requires exactly-once delivery of YCTs, making
Cykas unsuitable to be adapted to the assumption of an unreliable network.
The other point is the way Cykas stores and processes metadata at each
process. Both the space at each process and processing cost per message at the
sender potentially grows as $O(no)$, the product of the number of processes $n$
with the process outdegree $o$ (the number of destinations of messages sent
by the process).

\section{A new causal delivery algorithm with hybrid buffering}

\subsection{System model and assumptions}

We assume an asynchronous unreliable network. There is no global clock, no
knowledge about relative processing speeds and no bound for the time it takes
a message to arrive. Messages can be lost, duplicated, and reordered. If
messages keep being sent between two processes, eventually some message will
get through.
We use the unreliable network assumption for the following reasons: as the
results are stronger, applicable to reliable networks; to stress the
suitability of the algorithm to unreliable networks, unlike some other
algorithms; and to stress the little space overhead that causal delivery will
impose over what would have to be stored anyway (either by the
algorithm or in the transport layer) to achieve reliability, regardless of
delivery order.

Algorithms that assume a reliable network, and whose practical implementations
use TCP, will be vulnerable to network problems: if a connections fails
somehow (e.g., due to a change of IP in a mobile host), some messages already
written to the TCP stream may get lost. Also, many systems involving possibly
thousands of processes per host need to multiplex one (or a few) TCP streams
connecting hosts. This can potentially cause Head-of-Line blocking, causing
delays in causally unrelated messages between hosts, if they share a TCP
stream.
Our assumption of unreliable network will allow different implementations
(e.g., over UDP) without suffering from these problems.

The algorithm assumes that process do not fail, but makes no assumption about
the number or identities of all processes involved (i.e., no assumption about
some a priori configuration), only that each process has a globally unique
identifier. Processes may be added along time. Contrary to algorithms which
assume $n$ processes, and have $n$-dependent data structures, our algorithm
will have data structures with numbers of entries which may depend on the
indegree or outdegree of the communication graph, which the algorithm does not
need to know. For large systems with many thousands of processes, but where
each node communicates only with some tens or hundreds of processes, this is
important for scalability and dynamicity.

\subsection{Overview}

Our algorithm can be used for unicast of multicast. We will present it for
the unicast case and cover multicast in Section~\ref{sec:multicast}.
We start by a basic version, enforcing the CSPS strategy,
and will present the SPS-optimal variant in Section~\ref{sec:optimizized-algorithm},
allowing better delivery latency in some scenarios.
There are three components in the basic algorithm:
\begin{itemize}
  \item Tracking still missing acks of messages sent before a given message
    $m$ is sent, to know if a permit is needed for $m$ when network-sending it,
    and when to send the permit in that case.
  \item Tracking still missing permits regarding messages delivered before a
    given message $m$ is causal-sent, to known if it is already possible to 
    network-send $m$ or if it must still be kept in the send-buffer.
  \item Enforcing FIFO.
\end{itemize}

Regarding acks of sent messages, the essential insight is that, if we
identify sent messages by a total order (using a process wide clock assigning
integers to messages), we do not need to track the set of missing acks per
message. If there is an ack missing regarding messages before message number
$k$, it will be missing in what concerns any message with number $l > k$.
Therefore, we only need a data structure which allows knowing efficiently what
is the number of the oldest message with the ack still missing. Together with
keeping the payloads themselves until acked, for network fault tolerance, this
will prompt a \emph{sliding array} data structure: a dynamic array, which
grows being appended, but which keeps only elements in a sliding window
starting from a \emph{first} index until the \emph{next} index to which an
element will be appended.  When an ack arrives, we use the message id as index
in the sliding array, for effectively $O(1)$ lookup.

Regarding permits for received messages, a similar insight can be made. If
a permit still missing prevents message number $k$ from being
network-sent, it will also prevent message number $l > k$ from being network-sent.
(This applies to CSPS but not to SPS, as we will see.)
But permits arrive from different origins, and we cannot use their message
ids as indexes. What the algorithm does is assign an increasing permit number per
message delivered needing a corresponding permit. When a message $m$ is
causal-sent, the number to be assigned to the next permit is kept with
$m$. This allows knowing if there are still some missing permits, by comparing
the number kept with $m$ with the number assigned to the oldest delivered message
with corresponding permit still missing. This prompts a data structure
resembling a sliding array. But for computation efficiency there is a problem:
when a permit arrives we cannot use the message id as an index for lookup
in such a sliding array; we would need to traverse the elements themselves
until a match is found. To overcome this problem we use a different data
structure: a sliding map, which combines a sliding array of booleans (i.e., a
sliding bit-array) with a map which stores indexes into the sliding bit-array as
values, to allow effectively $O(1)$ lookup by key.

\subsection{The sliding array and sliding map data-structures}

A \emph{sliding array} is aimed at efficiently storing elements in
a sliding window starting from a \emph{first} index until the \emph{next}
index to which an element will be appended. The operations which we use in the
algorithm are: the update operations $\af{add(E)}$ and $\af{remove()}$
(removing and returning the first element);
the query operations $\af{first}$, $\af{next}$, $\af{size}$, and
$\af{peek}$ (which returns the oldest element for a non-empty array);
access by index, as a normal array. An implementation can keep two integer indexes
(first and next) and a very small dynamic array of pointers to larger arrays,
as shown in Figure~\ref{fig:sliding-array}.

\begin{figure*}
    \centering
    \begin{subfigure}{0.48\textwidth}
        \centering
        \includegraphics[scale=0.9]{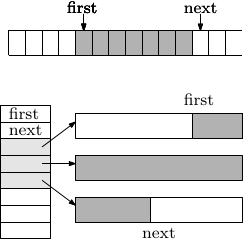}
    \caption{sliding array and a concrete representation.}
    \label{fig:sliding-array}
    \end{subfigure}%
    \hfill
    \begin{subfigure}{0.48\textwidth}
        \centering
        \includegraphics[scale=0.9]{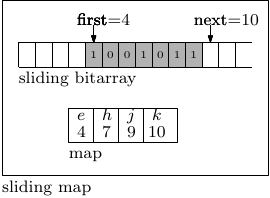}
      \caption{sliding map (abstract representation) after adding elements $a,
      b, c, d, e, f, g, h, i, j, k$, and removing elements $a, b, c, d, f, g,
      i$.}
    \label{fig:sliding-map}
    \end{subfigure}%
  \caption{Sliding array  and sliding map (combination of sliding bit-array and
  map).}
\end{figure*}

When $\af{first}$ advances past all elements in the oldest array, the array
is discarded and elements in the pointer array shifted, discarding the first
entry. When the array where elements are added to becomes full, a new array is
allocated and pointed to by a new entry in the pointer array. In practice, the
pointer array may never need to grow, if starting from some base capacity, and
be inlined in the main object to avoid an indirection if it does not need to
grow (e.g., as the Rust $\af{smallvec}$~\cite{smallvec}). Operations will be
in amortized constant time, with most of the time operations either only
accessing cached memory, or having one non-cached access.

A \emph{sliding map} also keeps track of elements associated with a
sliding window, but allows access by key, as a map, instead of by index.
The operations we use in the algorithm are: the update operations
$\af{add(E)}$, $\af{remove(E)}$, and the query operations $\af{first}$ and
$\af{next}$, which return indexes, as for the sliding array. Contrary to the
sliding array, which stores a contiguous range of elements, where only the
oldest element is removed, in the sliding map there will be positions with no
associated element (as removes are by key). An implementation of a sliding map
can be obtained by combining a sliding array of bits, where the value denotes
presence at the given index, with a map from keys to indexes in the sliding
bit-array, as illustrated in Figure~\ref{fig:sliding-map}.
When a remove happens to be for the element at the first position,
$\af{first}$ is advanced by skipping all 0 bits until a 1 is found (or the
next position is reached, if the map becomes empty). Comparison with
$\af{next}$ during the traversal can be avoided by keeping a sentinel 1 at the
$\af{next}$ position. Assuming a map implementation with amortized effectively
constant time operations (as a typical hashtable), the sliding map will also
provide amortized effectively constant time operations.

A variant, \emph{indexable sliding map}, will be used by the SPS-optimal algorithm.
It also allows access by index as a sliding array, and provides the query
operation $\af{index(E)}$ to obtain the index of an element (doing a map lookup).
A possible implementation can be using a map together with a
$\af{SlidingArray}[1 + E]$, i.e., a sliding array of optional $E$'s, instead
of a sliding bit-array.

\subsection{The basic unicast algorithm}
\label{sec:basic-algorithm}

Algorithm~\ref{alg:basic-unicast} is the basic unicast causal delivery
algorithm, enforcing CSPS + FIFO, for a generic process $i$ from the set of process
identifiers $\ids$. We use $i$ subscripted terms for process state
variables or for actions that happen at process $i$ (like send or receive).
Unsubscripted variables are local/temporary constants or variables. We use
$\gets$ for variable assignment and $=$ for let binding.
For presentation simplicity we use a single Msg record both for messages in the
send-buffer, in the unacked buffer and in transit (the $\field{rcv}$ field
is not needed for messages in transit).
The $\field{per}$ field is either an integer index (while in the
send-buffer) or a boolean (denoting a permit is needed, elsewhere). The Per
record is used for permits and the Rcv record for messages received but not yet
delivered (as FIFO needs to be enforced).
All maps are assumed to only keep non-bottom values; a non-mapped
key implicitly maps to bottom, e.g., maps to zero for non-negative integer
values.

\begin{algorithm2e*}

\caption{Basic causal delivery algorithm for process $i$.}
\label{alg:basic-unicast}
\newcommand\f[1]{\field{#1}}
\begin{multicols}{2}
  \types{
    $\ids$, process identifiers \;
    $\P$, message payloads \;
    $\Msg: \record$ \{ \;
    \quad $\f{rcv} : \ids$, receiver id \;
    \quad $\f{mid} : \nat$, message id \;
    \quad $\f{pid} : \nat$, predecessor from same sender \;
    \quad $\f{per} : \nat + \bool$, permit index / flag \;
    \quad $\f{pl} : \P_\bot$, message payload \;
    $\}$, message record \;
    $\Per: \record$ \{ \;
    \quad $\f{snd} : \ids$, sender id \;
    \quad $\f{mid} : \nat$, message id \;
    \}, permit record \;
    $\Rcv: \record$ \{ \;
    \quad $\f{mid} : \nat$, message id \;
    \quad $\f{pl} : \P$, message payload \;
    \quad $\f{per} : \bool$, permit flag \;
    \}, receive buffer record \;
  }
  \BlankLine
  \state{
    $ck_i : \nat = 0$, clock for message ids \;
    $u_i : \af{SlidingArray}[\Msg]$, unacked messages \;
    $p_i : \af{SlidingMap}[\Per]$, missing permits \;
    $ls_i: \ids \pfunc \nat = \emptyset$, last sent message ids \;
    $ld_i: \ids \pfunc \nat = \emptyset$, last delivered message ids \;
    $sb_i : \Msg^*$ = [], send buffer \;
    $rb_i : \ids \pfunc \nat \pfunc \Rcv = \emptyset$, receive buffer \;
  }
  \BlankLine

  \procedure({$\trySend()$}) {
    \While {$sb_i \neq []$} {
      $m = sb_i.\peek$ \;
      \If { 
        $p_i.\first < m.\f{per}$
      }
      {
        \return
      }
      $sb_i.\remove()$ \;
      $m.\f{per} \gets u_i.\size > 0$ \;
      $u_i.\add(m)$ \;
      $\send_i(m.\f{rcv}, \atom{msg}, m)$ \;
    }
  }

  \BlankLine
  \on({$\causalSend_i(j, p)$}) {
    $m = \Msg \{ \f{rcv} : j, \f{mid} : ck_i, \f{pid} : ls_i[j],$ \;
    $\hphantom{m = \Msg \{}
    \f{per} : p_i.\next, \f{pl} : p \}$ \;
    $ls_i[j] \gets ck_i$ \;
    $ck_i \gets ck_i + 1$ \;
    $sb_i.\add(m)$ \;
    $\trySend()$ \;
  }

  \BlankLine
  \on({$\receive_i(j, \atom{msg}, m)$}) {
    \If {$m.\f{mid} \leq ld_i[j]$} {
      $\send_i(j, \atom{ack}, b.\f{mid})$ \;
      \return
    }
    $e = rb_i[j]$ \;
    $e[m.\f{pid}] \gets \Rcv(m)$ \;
    \While {$ld_i[j] \in \dom(e)$} {
      $b = e.\remove(ld_i[j])$ \;
      $ld_i[j] \gets b.\f{mid}$ \;
      \If {$b.\f{per}$} {
        $p_i.\add(\Per\{ \f{snd} : j, \f{mid} : b.\f{mid} \})$ \;
      }
      $\send_i(j, \atom{ack}, b.\f{mid})$ \;
      $\deliver_i(b.\f{pl})$ \;
    }
  }

  \BlankLine
  \on({$\receive_i(j, \atom{ack}, n)$}) {
    \If {$n < u_i.\first$} {
      $\send_i(j, \atom{permit}, n)$ \;
      \return
    }
    $u_i[n].\f{pl} \gets \bot$ \;
    \If {$n = u_i.\first$} {
      $u_i.\remove()$ \;
      \While {$u_i.\size > 0$} {
        $m = u_i.\peek$ \;
        \If {$m.\f{per}$} {
          $\send_i(m.\f{rcv}, \atom{permit}, m.\f{mid})$ \;
        }
        \If {$m.\f{pl} \neq \bot$} {
          \return
        }
        $u_i.\remove()$ \;
      }
    }
  }

  \BlankLine
  \on({$\receive_i(j, \atom{permit}, n)$}) {
    $p_i.\remove(\Per\{\f{snd}: j, \f{mid}: n\})$ \;
    $\trySend()$ \;
  }

  \BlankLine
  \periodically() {
    \For {$m$ \kwin $u_i$} {
      \If {$m.\f{pl} \neq \bot$} {
        $\send_i(m.\f{rcv}, \atom{msg}, m)$ \;
      }
    }
    \For {$p$ \kwin $p_i$} {
      $\send_i(p.\f{snd}, \atom{ack}, p.\f{mid})$ \;
    }
  }

\end{multicols}
\end{algorithm2e*}

The common way to enforce FIFO is to have a sequence number per destination,
but we use messages ids from a single sequence, regardless of destination.
We use the following approach: the id of the predecessor from the same origin
is sent with the message (field $\field{pid}$). For this we keep the last sent
message per destination ($ls_i$ map, line 24). Also, the receive buffer
($rb_i$, line 27) is designed as a map from senders to a map from predecessor
ids to Msg records. This allows $O(1)$ processing of received messages even if
they arrive out of order: an entry is added to the map for the given sender,
with predecessor id as key (line 49, where the Rcv constructor extracts the
relevant fields). After a message is received, it can be delivered if the
last delivered ($ld_i$ map, line 25) message from that sender becomes present
as key.  When delivering, the map entry is removed, $ld_i$ is updated and the
test repeated, looping until a message is missing (line 50, where $\af{dom}$
represents map domain, i.e., the map keys).

When a delivered message was flagged as needing a corresponding permit
(line 53), a permit is added (line 54) to the missing permits sliding map
(line 23), the Per record having the sender and message id. A 
causal-sent (line 37) message will be associated (line 39) to the
$\af{next()}$ index of the permits sliding map, before being added (line 42) to the
send-buffer (line 26). This will delay (lines 31, 32) the network-send (line
36) until all permits before that index (those for each message previously
delivered needing a permit) are received, upon which $p_i.\af{first}()$ reaches
$m.\field{per}$.

The $\af{trySend}()$ procedure (lines 28--36) tries to send messages from the
send-buffer, in the order they were buffered, if conditions allow. It is
invoked after a message is added to the send-buffer (line 43) or when a permit
is received and removed from the permits sliding map (lines 71--73).  It
checks if all needed permits for the first message in the buffer have been
received (condition at line 31 being false), upon which the message is removed
from the buffer, flagged as needing a permit if there is any unacked message
(line 34), added (line 35) to the unacked messages sliding array (line 22) and
network-sent. The procedure is repeated for the remaining buffer upon a
successful send.

When a message is delivered, it is acked (lines 55, 56). Upon receiving an
ack (line 57), the payload is removed from the entry in the unacked
sliding array, using the message id as index (line 61). If the ack corresponds
to the first entry in the sliding array (line 62), then the contiguous range
of messages already acked, starting from the one just acked, is removed from
the sliding array. For each of these messages, and also for the still unacked
message that becomes the new first position, if the message had been flagged
as needing a permit, the permit is sent (lines 66, 67).

This algorithm tolerates network faults (message loss, duplication,
reordering). In general, the effect of all receives, whether $\atom{msg}$,
$\atom{ack}$, or $\atom{permit}$ is designed to either ignore the message but
possibly replying as usual, or having an idempotent effect, in case there is
no local knowledge to decide whether it is a duplicate.

Any message received is tested as a possible duplicate already
delivered (line 45), but acked anyway even if so, as a previous ack may have
been lost. If it has already been received but not yet delivered, an
idempotent action will be performed (line 49).
Acks are treated in a similar way. When an ack is received of a message
no longer present in the unacked buffer, it is ignored, but a permit sent
anyway (lines 58--59), as a previous one may have been lost. It may happen
that no such permit was sent but no information exists now to know if that is
the case.  Sending it will be harmless, as it will cause the removal of a
non-existing entry in the sliding map (line 72).

If messages are lost, after some time with no progress retransmission occurs,
as summarized by lines 74--79. (An actual implementation can use more refined
timeouts, e.g., per message.) Periodically, each message in the unacked
sliding array for which an ack has not been received will be sent (lines
75--77). Also, for each missing permit, an ack will be sent (lines 78--79), as
if the message had just been delivered. This will cause the sender, upon
receiving the ack, to (re)transmit the missing permit, when the message is no
longer in the unacked buffer.
This mechanism is needed, because a permit may be lost and the sender has
removed the corresponding message from the unacked buffer, in which case
it will never retransmit the message.

\subsection{Correctness of the basic algorithm}

\subsubsection{Safety}
From Proposition~\ref{prop:fifo-csps-causal}, it is only necessary to show
that it enforces FIFO and CSPS.

\begin{proposition}
The basic algorithm enforces FIFO.
\end{proposition}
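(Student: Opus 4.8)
The plan is to show that, for any fixed sender $j$, the messages from $j$ are delivered at process $i$ strictly in the order of their message ids (equivalently, in causal-send order at $j$), which is exactly what FIFO requires. The mechanism that enforces this is the chain of predecessor links recorded in the $\field{pid}$ field together with the $ld_i$ map tracking the last delivered id per sender. The key invariant I would establish and maintain is: \emph{at process $i$, a message $b$ from sender $j$ is delivered only when $b.\field{pid} = ld_i[j]$ at the moment of delivery, and delivery then advances $ld_i[j]$ to $b.\field{mid}$.} Since $\causalSend_j$ sets $\field{pid}$ to $ls_j[j']$ (the previous message id sent to the \emph{same destination}) --- wait, here the relevant point is that the predecessor is keyed per destination, so the chain to be reconstructed is the sequence of messages from $j$ \emph{to $i$}; FIFO is per channel.

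First I would make precise what FIFO means in this per-channel setting: for every ordered pair $(j,i)$, if $j$ causal-sends $m_1$ to $i$ before $m_2$ to $i$, then $i$ delivers $m_1$ before $m_2$. Then I would argue that the $\field{pid}$ links form a linked list over exactly the messages sent from $j$ to $i$, because $ls_j$ is indexed by destination (line 24 updates $ls_j[j']$ only for the actual destination), so $m_2.\field{pid}$ equals the $\field{mid}$ of the immediately preceding message from $j$ to $i$. Next I would examine the $\receive_i(j,\atom{msg},m)$ handler: an arriving message is stored in $rb_i[j]$ keyed by its $\field{pid}$ (line 49), and the \texttt{while} loop (lines 50--56) delivers a buffered message $b$ only when $ld_i[j] \in \dom(e)$, i.e.\ when the key $ld_i[j]$ is present --- which, by the linked-list structure, is precisely the successor of the last delivered message. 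Each delivery sets $ld_i[j] \gets b.\field{mid}$ (line 52), so the loop walks the list in order and can never skip a message or deliver one out of turn.

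I would then confirm the base case and the duplicate/loss behavior: $ld_i[j]$ starts implicitly at $0$ (the empty map maps to bottom/zero), matching the $\field{pid}$ of the first message from $j$ to $i$; and the guard at line 45 ($m.\field{mid} \leq ld_i[j]$) discards any already-delivered duplicate while still acking it, so duplication and reordering by the network cannot cause an out-of-order or repeated delivery. The induction is on the sequence of delivery events at $i$ for the fixed sender $j$: assuming the first $n$ deliveries respect id order and leave $ld_i[j]$ equal to the $n$-th id, the $(n{+}1)$-th delivery can only fire on the message whose $\field{pid}$ equals that value, which is the $(n{+}1)$-th message in send order.

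The main obstacle I expect is \emph{not} the ordering argument itself, which is essentially a linked-list traversal, but rather pinning down the per-destination keying cleanly --- making sure the predecessor chain is genuinely restricted to the $(j,i)$ channel and does not get tangled with messages $j$ sends to other destinations. The subtlety is that $\field{mid}$ comes from a \emph{single} process-wide clock $ck_j$ (so ids from $j$ to $i$ are not contiguous), whereas $\field{pid}$ is set from the per-destination $ls_j[j']$; I must verify that storing received messages keyed by $\field{pid}$ in $rb_i[j]$ and advancing $ld_i[j]$ by the delivered $\field{mid}$ keeps the successor lookup correct despite the gaps in the id sequence. Once that keying is shown to isolate each channel, FIFO follows directly from the monotone advance of $ld_i[j]$ along the predecessor links.
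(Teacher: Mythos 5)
Your proof is correct and takes essentially the same route as the paper's (much terser) argument: delivery is gated by the $\field{pid}$ predecessor link matching $ld_i[j]$ (the condition at line 50), so messages from a given sender are delivered strictly along the predecessor chain. Your additional care in pinning down that the chain is per destination channel (since $\field{pid}$ comes from $ls_j$ indexed by receiver, while $\field{mid}$ comes from a single process-wide clock) is a worthwhile precision that the paper's one-line proof glosses over, but it is elaboration of the same mechanism, not a different approach.
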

\begin{proof}
Each message $m$ carries the id of the predecessor message $p$ from the same
sender, and the algorithm only delivers message $m$ when the last delivered
one from that sender is $p$ (when condition at line 50 is true).
\end{proof}

\begin{proposition}
The basic algorithm enforces CSPS.
\end{proposition}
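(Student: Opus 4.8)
The plan is to pin down exactly when a message leaves the send-buffer and then lift that purely local guarantee to the whole transitive $\rel{hb}$-past by a well-founded induction. First I would characterize the network-send guard. In $\af{trySend}$ a message $m$ of process $i$ is network-sent precisely when $p_i.\af{first} \geq m.\field{per}$, where $m.\field{per}$ was set to $p_i.\af{next}$ at the instant $\causalSend_i$ ran. Since an entry is appended to the permit sliding map $p_i$ exactly when a delivered message flagged as needing a permit is delivered (line 54), and indices are assigned monotonically, the entries with index below $m.\field{per}$ are in bijection with the messages delivered at $i$ before $\mathrm{c}_i(m)$ that needed a permit; by the sliding-map invariant, $p_i.\af{first} \geq m.\field{per}$ holds iff every one of those permits has arrived. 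Hence \emph{$m$ is network-sent only once $i$ has, for every message $b$ delivered at $i$ before $\mathrm{c}_i(m)$, received $b$'s permit (or $b$ needed none).}

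Second I would fix the semantics of a permit. A permit for $b$ carries $\langle j, b.\field{mid}\rangle$ and is emitted by $b$'s sender $j$ only after $b$ has reached the front of the unacked sliding array with all earlier entries acked (lines 62--68), i.e.\ only after every message $j$ network-sent before $b$ has been acked; and an ack is sent only at (or after) delivery. If $b$ was instead flagged as not needing a permit, then by line 34 the sender had an empty unacked array when it network-sent $b$, so again every message it sent before $b$ had already been delivered. Writing ``$b$ is \emph{settled}'' for the stable property that all messages $b$'s sender sent before $b$ have been delivered, both cases yield: whenever $i$ counts $b$'s permit as satisfied, $b$ is settled. Combined with the first step, \emph{when $m$ is network-sent every message delivered at $i$ before $\mathrm{c}_i(m)$ is settled}; call a message that has been network-sent and is settled \emph{confirmed}.

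The core is then the invariant, proved by well-founded induction on $\rel{hb}$ (acyclic, with finitely many predecessors in any finite run prefix): \emph{if a message $x$ sent by $p$ is confirmed at time $t$, then every $m'$ with $m' \rel{hb} x$ has been delivered (indeed confirmed) by $t$}. For the step I take the last basic relation $w \rel{hb} x$. If it is the second $\rel{hb}$-clause (same sender $p$, $\mathrm{c}_p(w)$ before $\mathrm{c}_p(x)$), then $w$ is one of the messages $p$ sent before $x$, so settledness of $x$ delivers $w$ and, since ``before $w$'' $\subseteq$ ``before $x$'', makes $w$ settled, hence confirmed. If it is the first clause ($w$ delivered at $p$ before $\mathrm{c}_p(x)$), then $x$ having been network-sent forces, by the network-send guard of the first step now applied at $p$, that $w$'s permit was received by $p$ (or $w$ needed none), so $w$ is settled; and $w$ being delivered means it was network-sent, so $w$ is confirmed. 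Either way $w$ is confirmed and $w \rel{hb} x$, so the induction hypothesis delivers all predecessors of $w$, which together with $w$ itself delivers $m'$. Finally CSPS follows: for any $m' \rel{hb} m$ with $m$ network-sent by $i$, a decomposition of the $\rel{hb}$-chain produces a message $b$ delivered at $i$ before $\mathrm{c}_i(m)$ with $m' = b$ or $m' \rel{hb} b$ (the chain's last entry into $i$ is a first-clause step feeding messages causal-sent by $i$ no later than $m$); by the second step $b$ is confirmed, so the invariant applied to $b$ delivers $m'$ (when $m' = b$ it is delivered at $i$ directly). This in fact delivers \emph{all} of $m$'s causal past, a fortiori the senders-to-$i$ part required by the definition of CSPS.

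The step I expect to be the main obstacle is the inductive invariant, and specifically its first-clause case. A naive version --- ``merely delivered $\Rightarrow$ causal past delivered'' --- is \emph{false}, since a same-sender predecessor of a first-clause predecessor may still be in transit to a third party when that predecessor is delivered. What rescues the argument, and the reason I phrase it in terms of \emph{confirmed} rather than \emph{delivered}, is that a first-clause predecessor $w$ of a network-sent $x$ cannot merely have been delivered: the sender of $x$ blocked its own network-send on $w$'s permit, and that permit certifies (via settledness) that $w$'s same-sender predecessors were themselves delivered. Making this propagation precise --- that confirmation status genuinely travels backwards along both $\rel{hb}$-clauses without circularity --- is the delicate part; it rests entirely on the permit-waiting in $\af{trySend}$ together with the ``flagged iff there were unacked messages'' rule at line 34, so I would verify those two mechanisms carefully before trusting the induction.
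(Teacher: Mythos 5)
Your proof is correct, and its first two steps are exactly the paper's own argument: the paper's proof is a four-point sketch whose points 1--3 are your network-send guard (lines 31--32 with the flagging of line 34) and your permit semantics (lines 58, 62, with acks only on delivery). Where you genuinely diverge is your third step. The paper's point 4 simply asserts that the local guarantees ``together imply'' CSPS; it never spells out how dependencies that reach $i$ through third-party processes are covered --- e.g.\ $m' $ sent by $k$ to $q$, then $q$ to $p$, then $p$ to $i$: here $m'$ is neither delivered at $i$ nor an earlier message of $b$'s sender, so the local permit reasoning alone does not reach it. Your well-founded induction on $\rel{hb}$ with the \emph{confirmed} (network-sent and settled) invariant closes precisely this gap, and your closing remark correctly identifies why the naive ``delivered $\Rightarrow$ past delivered'' version fails. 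So your proof is the paper's argument plus the missing transitivity step made rigorous; what the paper's terseness buys is brevity, what yours buys is an actual proof of the transitive case. One small overstatement: you claim the argument delivers \emph{all} of $m$'s causal past, but same-sender predecessors of $m$ (messages $i$ itself causal-sent earlier) need not be delivered when $m$ is network-sent --- that is exactly what pipelining permits. Your chain decomposition only applies to predecessors originating at other processes, which is all that CSPS (and Proposition~\ref{prop:fifo-csps-causal}) requires.
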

\begin{proof}
1) If there are missing acks of messages previously sent by a process
when the process network-sends $m$, then $m$ will be flagged as
needing a subsequent permit (line 34). This will make a corresponding entry
be created in the missing permits sliding-map when $m$ is delivered (lines
53--54).
2) A permit for message with id $n$ is only sent when the index of the first
message in the unacked sliding-array is not less than $n$ (conditions at lines
58 and 62), which implies that all messages by the same process that happened
before (with id less than $n$) have been acked, and therefore, delivered, as
acks are only sent for messages delivered (lines 46, 55, and 79). 
3) Process $i$ only network-sends message $m$ after receiving all permits for
messages that have been delivered before $m$ is causal-sent
(the permits that have an index less than $m.\field{per}$)
delaying the send while that is not true (lines 31--32).
4) Points 1, 2, and 3 together imply that $m$ is network-sent by $i$ only
when all messages that happened before $m$, sent by any process that sent
messages to $i$, have been delivered.
\end{proof}

\subsubsection{Liveness}

The essential point for liveness is that, for any causal-sent
message that has been buffered, the network-send cannot be postponed
indefinitely, due to the delivery of messages after the causal-send: it
depends only on permits concerning previously delivered messages.

\begin{lemma}
\label{lemma:unacked-buffer}
If a message is added to the unacked buffer, it will be eventually received and
delivered at the destination, and removed from the unacked buffer.
\end{lemma}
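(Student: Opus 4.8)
The plan is to prove the three conclusions---eventual \emph{receipt}, \emph{delivery}, and \emph{removal from the unacked buffer}---together, by strong induction on the order in which messages are appended to $u_i$. This order coincides with increasing message id: $\af{trySend}$ consumes the (FIFO) send-buffer strictly front-to-back and never skips, so it appends to $u_i$ in causal-send order, and ids are handed out by the single clock $ck_i$. I would fix a message $m$ appended to $u_i$, with destination $j = m.\field{rcv}$, and take as induction hypothesis that every message appended to $u_i$ before $m$ is eventually received, delivered, and removed from $u_i$.

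For \emph{receipt}, the key observation is that $m$ cannot leave $u_i$ before it is received: the removal loop in the $\atom{ack}$ handler never advances past an entry whose payload is still non-$\bot$, and a payload is cleared only on line 61 upon that message's ack, while acks are only ever sent for messages already delivered (lines 46, 55, 79). Hence, before $m$ is first received, no ack for $m$ can exist, so $m$ stays in $u_i$ with $m.\field{pl} \neq \bot$ and is network-sent infinitely often by the periodic retransmission (lines 75--77). The fairness assumption (if messages keep being sent between two processes, some eventually gets through) then yields that $m$ is eventually received at $j$.

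For \emph{delivery}, on receipt $j$ stores $m$ in $rb_j[i]$ under key $m.\field{pid}$ (line 49), and by the FIFO property established above $m$ is delivered precisely when $ld_j[i]$ reaches $m.\field{pid}$, that is, once $m$'s predecessor to $j$---which was appended to $u_i$ before $m$---has been delivered. The first message to $j$ has no predecessor constraint and is delivered on receipt (base case); otherwise the induction hypothesis gives that the predecessor is eventually delivered, at which point the delivery loop (line 50) either continues on to $m$ if it is already buffered, or delivers it on the next of its infinitely many receipts. For \emph{removal}, once delivered $m$ is acked (line 55) and re-acked on every later duplicate receipt (line 46); since $i$ retransmits $m$ while it is unacked, fairness gives that $i$ eventually receives an ack for $m$ and sets $m.\field{pl} \gets \bot$ (line 61). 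By the induction hypothesis all earlier entries of $u_i$ are eventually removed, so $m$ reaches the front of the sliding array; being then a front entry with $\bot$ payload, it is discarded either by the removal loop triggered when the entry just before it is dropped, or directly by its own ack (line 62). Induction then closes the argument.

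The main obstacle is orchestrating the two different inductive dependencies---FIFO delivery needs only the immediate \emph{predecessor} delivered, whereas sliding-array removal needs \emph{all} earlier entries removed---and invoking fairness soundly: one must check that the relevant message, and later its ack, is genuinely kept in flight by periodic retransmission, and that the buffered copy of $m$ is actually picked up by the delivery loop once its predecessor arrives, rather than sitting inertly in $rb_j[i]$. A secondary point to verify carefully is that nothing a process does \emph{after} $m$ is buffered (new receives or sends) can clear $m.\field{pl}$ or remove $m$ prematurely, so that the ``received before removed'' ordering relied on in the receipt step indeed holds.
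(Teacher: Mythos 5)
Your proof is correct and follows essentially the same route as the paper's own: retransmission of unacked messages plus link fairness gives eventual receipt, the predecessor-id chaining gives delivery once the FIFO predecessor is delivered, re-acking of duplicates ensures the sender eventually learns of delivery, and front-removal of the sliding array completes the removal argument. The only difference is presentational: you make explicit, as a strong induction on the append order to $u_i$, what the paper leaves implicit in its claim that ``for each destination, the oldest such message still not delivered will be eventually received, delivered and acked,'' which is a sound and somewhat more rigorous rendering of the same argument.
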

\begin{proof}
If a message is added to the unacked buffer, the entry remains there until
all previously sent messages, and the message itself, have been acked (lines
62--70). While a message is still unacked, the message is periodically
retransmitted (lines 75--77) until an ack is received. For each destination,
the oldest such message still not delivered will be eventually received,
delivered and acked (lines 50--56). An ack will be eventually received, even
if acks are lost, because the ack is sent anyway even if the message has
already been delivered (lines 45--47). This implies that a message put in the
unacked buffer will be eventually received and delivered at the destination,
and removed from the buffer.
\end{proof}

\begin{lemma}
  \label{lemma:permits}
Each missing permit will be eventually received and removed from the
sliding-map.
\end{lemma}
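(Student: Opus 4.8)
The plan is to trace an arbitrary missing permit back to the retransmission machinery and combine it with Lemma~\ref{lemma:unacked-buffer} and the eventual-delivery (fairness) assumption. Fix an arbitrary entry $\Per\{\field{snd}: j, \field{mid}: n\}$ of the missing-permits map $p_i$. Such an entry is created (lines 53--54) only when process $i$ delivers a message $b$ from $j$ with $b.\field{mid} = n$ that had been flagged $b.\field{per} = \true$. On the sender side that flag was set because $u_j$ was non-empty when $b$ was network-sent (line 34), but in any case $b$ is appended to $u_j$ (line 35); since ids are drawn from a single increasing clock and messages enter $u_j$ in id order, $b$ occupies index $n$ in $u_j$ (the same indexing used for ack lookup). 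I would first record the two structural facts the argument rests on: that $u_j.\af{first}$ is monotonically non-decreasing, advancing only on $\af{remove}$, and that $b$ sits at index $n$.

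Next I would show that the sender eventually, and permanently, reaches a state in which any ack for $n$ triggers a permit. By Lemma~\ref{lemma:unacked-buffer}, $b$ is eventually removed from $u_j$; once that happens $n < u_j.\af{first}$ holds, and by monotonicity it holds from then on. From that point, every ack for $n$ reaching $j$ satisfies the guard $n < u_j.\af{first}$ and so makes $j$ send a permit for $n$ back to $i$ (lines 58--59). This is exactly the case the retransmission design targets: although the normal path emits the permit when $b$ becomes the oldest entry of $u_j$ (lines 66--67), a lost permit cannot be recovered by the sender alone, since $b$ is no longer in $u_j$ to be retransmitted.

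I would then close the loop using periodic retransmission and fairness, arguing by contradiction. Suppose the entry is never removed from $p_i$. Then on every periodic round process $i$ sends an ack for $n$ to $j$ (lines 78--79), so $i$ emits such acks infinitely often and never stops; by the eventual-delivery assumption, acks for $n$ therefore reach $j$ infinitely often. Discarding the finite prefix before $b$ leaves $u_j$, each subsequent received ack makes $j$ send a permit for $n$ to $i$, so permits for $n$ are sent infinitely often, and by eventual delivery $i$ eventually receives one. Receiving it removes the entry from $p_i$ (line 72), contradicting the assumption and proving the lemma.

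The main obstacle I anticipate is the unreliable-network bookkeeping rather than the high-level flow. I must check that a receiver-initiated ack arriving at $j$ \emph{while} $b$ is still present in $u_j$ (before Lemma~\ref{lemma:unacked-buffer} fires) has only a harmless, idempotent effect and never causes $i$ to stop retransmitting; and that the guard $n < u_j.\af{first}$ becomes true once and for all rather than intermittently. Nailing down the monotonicity of $u_j.\af{first}$ and the fact that $i$ keeps sending acks until the permit is actually received is precisely what makes the fairness assumption applicable on both legs of the cycle and rules out a livelock.
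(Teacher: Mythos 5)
Your proof is correct and follows essentially the same route as the paper's: an entry in $p_i$ is traced to the flagged message in the sender's unacked buffer, Lemma~\ref{lemma:unacked-buffer} guarantees that message eventually leaves the buffer, and then the periodic ack retransmission (lines 78--79) combined with the sender's permit reply (lines 57--59) forces eventual removal of the entry (lines 71--72). Your additions (monotonicity of $u_j.\af{first}$, the index-equals-id observation, and the explicit fairness-by-contradiction framing) are just a more rigorous spelling-out of the same argument.
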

\begin{proof}
An entry in the permits sliding-map is created when the corresponding
message $m$, that was sent flagged as needing a permit and put in the unacked
buffer, is delivered.
By Lemma~\ref{lemma:unacked-buffer}, eventually, at the sender, the entry
corresponding to $m$ will be removed from the unacked buffer.
For each still missing permit, an ack will be periodically sent to the
corresponding sender (lines 78--79). Eventually, the sender will
always reply to such acks with a permit (lines 57--59). Eventually, a permit is
received, causing the removal of the corresponding entry from the permits
sliding-map (lines 71--72).
\end{proof}

\begin{proposition}
For the basic algorithm, any causal-sent message will be eventually delivered.
\end{proposition}
\begin{proof}
Any message $m$, causal-sent by a process $i$, will remain in the send-buffer
only if any permit corresponding to previously delivered messages is still
missing (lines 31--32).
By Lemma~\ref{lemma:permits}, all permits, namely those corresponding to
messages delivered at process $i$ before $m$ is causal-sent will be received and
removed from the sliding-map, upon which $m$ will be
added to the unacked buffer (line 35), and by
Lemma~\ref{lemma:unacked-buffer}, $m$ will be eventually delivered.
\end{proof}

\subsection{An SPS-optimal algorithm}
\label{sec:optimizized-algorithm}

The basic CSPS enforcement algorithm already allows scalable throughput
but, being conservative in the requirements for when to network-send messages,
it will result in some added delivery latency compared with the optimal
allowed by the SPS strategy.  Now we show how to enforce SPS optimally, to
allow better delivery latency in some scenarios.
The basic algorithm is conservative in three ways, that can be improved upon:
\begin{enumerate}
  \item From SPS, the need for a permit when network-sending to a process $j$
    is only if acks of messages to processes other than $j$ are still missing.
    The basic algorithm, enforcing CSPS, flags as needing a permit if there is
    any unacked message, even for the same receiver.
  \item If a message to a process $j$ is flagged as needing a permit, for the
    same reason, the basic algorithm only sends the permit when there is no
    ack for previously sent messages missing, but SPS only requires that there
    is no missing ack of messages to processes other than $j$.
  \item The basic algorithm delays a network-send of a message $m$ to a
    process $j$ if there is any missing permit for messages delivered before
    $m$ being causal-sent, even for messages from $j$. SPS only requires
    delaying the network-send if there are missing permits from processes
    other than $j$.
\end{enumerate}

Improving these three points does not allow the simple and elegant approach of
the basic algorithm. It requires some care and modifications to data
structures and algorithm for it to remain with amortized constant time
computation complexity.

A consequence of the third point is that, when a process causal-sends
$m_1$ and later $m_2$, to different destinations, it may be possible to
network-send $m_2$, when the permits missing for $m_2$ are all
from its destination, but not yet $m_1$, as some of those permits may be
common to $m_1$. This implies that the SPS-optimal algorithm cannot
assume that the network-send order coincides with the causal-send order, and
cannot simply use a FIFO send-buffer and look only at the first message.
This poses the problem of how to avoid having to look at the whole send-buffer
when a permit arrives.
Also, the causal-send order must be respected for the purpose of assigning
message ids, and to decide whether to flag a message as needing a permit or
when to send it: A previous causal-sent but not yet network-sent message must
be assigned a smaller id and considered as missing an ack.

The solution is to use a unified buffer, for both unacked and unsent messages,
based on a sliding-array, where each entry has an extra $\field{sent}$ field,
to denote whether the message has already been network-sent (or is still
buffered), together with two index variables to demarcate different parts of
the buffer. Figure~\ref{fig:unified-buffer-and-permits} depicts the unified
buffer after messages $a, b, \ldots n$ have been causal-sent, when $c$ is the
first unacked message. All messages before index $m1$ have been network-sent,
with some already acked (represented with $\bot$), corresponding to the
unacked buffer in the basic algorithm.
Index $m1$ corresponds to the first buffered message not yet network-sent
(the start of the range corresponding to the send-buffer). But now some
subsequent messages may have already been network-sent, and even acked.

\begin{figure}
\begin{center}
  \includegraphics[scale=1]{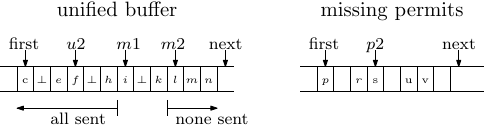}
\end{center}
  \caption{Example showing unified buffer and missing permits. All messages
  before index $m1$ have been network-sent and no message starting from index
  $m2$ have been network-sent. Some messages between $m1$ and $m2$ may have
  been network-sent and even acked ($\bot$ payload). At index $u2$ is
  the first unacked message to a different receiver than the oldest message in
  the buffer ($f.\field{rcv} \neq c.\field{rcv} = e.\field{rcv} $). Permit at
  index $p2$ is the first from a different sender ($s.\field{snd} \neq
  p.\field{snd} = r.\field{snd}$). Message at index $m2$ is the first
  depending on the permit at $p2$ ($l.\field{perm} > p2$ and $k.\field{perm}
  \leq p2$).}
  \label{fig:unified-buffer-and-permits}
\end{figure}

To avoid traversing the whole buffer starting from $m1$ whenever a permit
arrives, to decide which messages can be network-sent, we can use the
following insight: Considering the missing permits
indexable sliding map, if we maintain the index $p2$ of the first permit from
a different sender than the first permit, then no causal-sent message that
depends on it (i.e., with the $\field{perm}$ field greater than $p2$) can be
network-sent. Index $m2$ corresponds to the first such message that
cannot be network-sent due to depending on permits from different senders,
i.e., such that $u[m2].\field{perm} > p2$. Messages depending on permits with
smaller indexes can be network-sent if being sent to the sender of the first
permit.

Similarly, to avoid traversing the whole buffer up to $m2$ whenever an ack
arrives, to decide whether to send permits, we maintain another index
variable, $u2$, denoting the first unacked message to a different receiver
than the one of the first message in the buffer. No permit for messages with
id greater than $u2$ can be sent. Permits for messages with id smaller than
$u2$ can be sent if for the same receiver as the one of the first message in
the buffer.

Keeping these four variables, updating them incrementally, and using them to
decide if more messages or permits can be sent, plays an essential role in
achieving amortized constant time computation.
Whenever a permit arrives it will only trigger a traversal if it is the first
permit or the one at $p2$. Whenever an ack arrives, it will only trigger a
traversal if it is the oldest missing or for $u2$. Whenever a traversal is
performed, index variables are updated and no range of the unified buffer or
permits map is traversed more than twice. This results in amortized constant
time per operation.

Algorithm~\ref{alg:SPS-optimal-unicast} contains the SPS-optimal causal
delivery algorithm. We omit the types section, as only the Msg record differs
from the basic algorithm, having the extra $\field{sent}$ field. The state has
now the unified buffer, the missing permits is now an indexable sliding map,
and there are those four new index variables ($m1$, $m2$, $p2$, and $u2$).

\begin{algorithm2e*}

\caption{SPS-optimal causal delivery algorithm for process $i$.}
\label{alg:SPS-optimal-unicast}
\newcommand\f[1]{\field{#1}}
\vspace*{-6mm}
\begin{multicols}{2}
  \state{
    $ck_i : \nat = 0$, clock for message ids \;
    $u_i : \af{SlidingArray}[\Msg]$, unified buffer \;
    $p_i : \af{IdxSlidingMap}[\Per]$, missing permits \;
    $ls_i: \ids \pfunc \nat = \emptyset$, last sent message ids \;
    $ld_i: \ids \pfunc \nat = \emptyset$, last delivered message ids \;
    $rb_i : \ids \pfunc \nat \pfunc \Rcv = \emptyset$, receive buffer \;
    $p2_i: \nat = 0$, 1$^{\text{st}}$ permit from another sender \;
    $m1_i: \nat = 0$, 1$^{\text{st}}$ buffered msg \;
    $m2_i: \nat = 0$, 1$^{\text{st}}$ msg requiring $p2_i$ \;
    $u2_i: \nat = 0$, 1$^{\text{st}}$ msg needs ack another rcv\;
  }
  \BlankLine

  \on({$\causalSend_i(j, p)$}) {
    $m = \Msg \{ \f{rcv} : j, \f{mid} : ck_i, \f{pid} : ls_i[j],$ \;
    $\hphantom{m = \Msg \{}
    \f{per} : p_i.\next, \f{pl} : p, \f{sent} : \false \}$ \;
    $ls_i[j] \gets ck_i$ \;
    $ck_i \gets ck_i + 1$ \;
    $u_i.\add(m)$ \;
    $\sendInterval(u_i.\next -1 , p2_i)$ \;
  }
  \BlankLine

  \periodically() {
    \For {$m$ \kwin $u_i$} {
      \If {$m.\f{sent} \land m.\f{pl} \neq \bot$} {
        $\send_i(m.\f{rcv}, \atom{msg}, m)$ \;
      }
    }
    \For {$p$ \kwin $p_i$} {
      $\send_i(p.\f{snd}, \atom{ack}, p.\f{mid})$ \;
    }
  }
  \BlankLine

  \procedure({$\sendInterval(k, p)$}) {
    $p1 = p_i.\first$ \;
    $s = p_i[p1].\f{snd}$ \hfill // undefined if no permits \;
    $r = u_i.\peek.\f{rcv}$ \hfill // undefined if no msgs \;
    \While { $k < u_i.\next \land \, u_i[k].\f{per} \leq p$ } {
      $m = u_i[k]$ \;
      \If { $\lnot m.\f{sent} \land (m.\f{per} \leq p1 \lor m.\f{rcv} = s)$ } {
        $m.\f{sent} \gets \true$ \;
        $m.\f{per} \gets u2_i < k \lor m.\f{rcv} \neq r$ \;
        $\send_i(m.\f{rcv}, \atom{msg}, m)$ \;
      }
      $k \gets k + 1$ \;
    }
    \While {$m1_i < u_i.\next \land \, u_i[m1_i].\f{sent}$} {
      $m1_i \gets m1_i + 1$ \;
    }
  }
  \BlankLine

  \procedure({$\af{updateP2M2}()$}) {
    $p1 = p_i.\first$ \;
    \If {$p1 = p_i.\next$} {
      \return
    }
    $s = p_i[p1].\f{snd}$ \;
    \While {$p2_i < p_i.\next \land {}$ \; 
    \hphantom{While } $(p_i[p2_i] = \bot \lor p_i[p2_i].\f{snd} = s) $} {
          $p2_i \gets p2_i + 1$ \;
    }
    \While {$m2_i < u_i.\next \land \, u_i[m2_i].\f{perm} \leq p2_i$} {
      $m2_i \gets m2_i + 1$ \;
    }
  }
  \BlankLine

  \on({$\receive_i(j, \atom{ack}, n)$}) {
    \If {$n < u_i.\first$} {
      $\send_i(j, \atom{permit}, n)$ \;
      \return
    }
    $u_i[n].\f{pl} \gets \bot$ \;
    \If {$n = u_i.\first$} {
      $u_i.\remove()$ \;
      \While {$u_i.\size > 0$} {
        $m = u_i.\peek$ \;
        \If {$m.\f{sent} \land m.\f{per} \land m.\f{rcv} \neq j$} {
          $\send_i(m.\f{rcv}, \atom{permit}, m.\f{mid})$ \;
        }
        \If {$m.\f{pl} \neq \bot$} {
          \kwbreak
        }
        $u_i.\remove()$ \;
      }
      $n \gets u_i.\first$ \;
    }
    \If {$n = u2_i < u_i.\next$} {
      $r = u_i.\peek.\f{rcv}$ \;
      \While {$u2_i < u_i.\next$} {
        $m = u_i[u2_i]$ \;
        \If {$m.\f{pl} \neq \bot \land m.\f{rcv} \neq r$} {
          \kwbreak
        }
        \If {$m.\f{sent} \land m.\f{per} \land m.\f{rcv} = r$} {
          $\send_i(m.\f{rcv}, \atom{permit}, m.\f{mid})$ \;
        }
        $u2_i \gets u2_i + 1$ \;
      }
    }
  }
  \BlankLine

  \on({$\receive_i(j, \atom{msg}, m)$}) {
    \If {$m.\f{mid} \leq ld_i[j]$} {
      $\send_i(j, \atom{ack}, b.\f{mid})$ \;
      \return
    }
    $e = rb_i[j]$ \;
    $e[m.\f{pid}] \gets \Rcv(m)$ \;
    \While {$ld_i[j] \in \dom(e)$} {
      $r = e.\remove(ld_i[j])$ \;
      $ld_i[j] \gets r.\f{mid}$ \;
      \If {$r.\f{per}$} {
        $k = p_i.\next$ \;
        $p_i.\add(\Per\{ \f{snd} : j, \f{mid} : r.\f{mid} \})$ \;
        \If {$p2_i = k \land j = p_i.\peek.\f{snd}$} {
            $p2_i \gets k + 1$ \;
        }
      }
      $\send_i(j, \atom{ack}, r.\f{mid})$ \;
      $\deliver_i(r.\f{pl})$ \;
    }
  }
  \BlankLine

  \on({$\receive_i(j, \atom{permit}, n)$}) {
    $k \gets p_i.\af{index}(\Per\{\f{snd}: j, \f{mid}: n\})$ \;
    \If {$k = \bot$} {
      \return 
    }
    $p1 = p_i.\first$ \;
    $p_i.\remove(\Per\{\f{snd}: j, \f{mid}: n\})$ \;
    \If {$k = p1$} {
      $k \gets p_i.\first$ \;
      $\sendInterval(m1_i, k)$ \;
    }
    \If {$k = p2_i$} {
      $l = m2_i$ \;
      $\af{updateP2M2}()$ \;
      $\sendInterval(l, p2_i)$ \;
    }
  }
  \BlankLine

\end{multicols}
\end{algorithm2e*}

The procedure $\af{sendInterval}(k, p)$ (lines 25--37), always invoked with
$p\leq p2$, attempts to network-send messages in the unified buffer starting
from index $k$, and stopping when a message depending on a permit at index
$p$ is found, i.e., with the $\af{perm}$ field greater than $p$. A buffered
message can be network-sent if it does not depend on the first missing permit
or is to the sender of that first permit (line 31). When network-sent, a
message is flagged as needing a subsequent permit if there are pending acks of
previous messages to different destinations or to a different receiver (line
33).
The procedure also advances $m1$ if needed (lines 36--37), in case it is
invoked with the interval starting at $m1$.
This procedure is invoked when a message is causal-sent (line 18) for a
singleton interval, and may be invoked when a permit is received (lines 96
or/and 100).

Procedure $\af{updateP2M2}$ (lines 38--47), invoked when the permit at $p2$
ceases to be the first from a second sender, advances the $p2$ and $m2$
state variables so that they regain their meaning.

When a permit is received (lines 88--100), if it is still
present, it is removed from the map (line 93). Only two cases makes
network-sending possible: 1) if the received permit is the first permit, the
interval of buffered messages from $m1$ up to a message depending on what
becomes the new first permit is checked for network-sending (lines 94--96); 2)
if it corresponds to the permit at $p2$, the interval of buffered messages
from $m2$ up to a message depending on the updated $p2$ is checked for
network-sending (lines 97-100).
In case 1, when receiving and removing the first permit, it may happen that
the permit at $p2$ becomes the new first permit, and the same procedure as for
case 2 needs to be performed. In this case condition at line 97 will also
become true, $p2$ and $m2$, will be updated, and the second $\af{sendInterval}$
(line 100) will be invoked.

Receiving a message (lines 72--87) is similar to the basic algorithm. The only
difference is checking the need to update $p2$ when a permit is added: if
there was no missing permit from a second sender (i.e, $p2_i = p_i.\af{next}$)
and the message is from the sender of first permit (including the case when
the just added permit becomes the first), then $p2$ is advanced to the new
next (lines 84--85).

Handling an ack has some similarity to handling a permit, but now it attempts
to send permits, if possible. Again, only two cases matter for the ability to
send new permits:
1) if the received ack is the first in the buffer, permits will be sent for
messages network-sent to a different receiver flagged as needing a permit
(lines 57--58);
messages already acked will be removed until a message missing an ack is
found (lines 59--61);
2) if it corresponds to the ack at $u2$, the buffer is traversed from $u2$ up
to an unacked message to a different receiver than the first unacked message
(lines 67--68), which will become the new $u2$. A permit will be sent for each
network-sent message to that same receiver that needs a permit (lines 69--70).
In case 1, when receiving the ack for the first message and removing
contiguous acked messages, it may happen that $u2$ becomes the new first
message, and the same procedure as for case 2 needs to be performed.
In this case condition at line 63 will also become true, and the
corresponding code executed.

The algorithm correctness follows from the same arguments as the basic
algorithm, and the proof, which we omit, has the same structure.
For presentation simplicity, we have presented an algorithm which is
SPS-optimal only in reliable networks. It ensures safety and liveness in
unreliable networks, but under message loss, namely if a permit message is
lost, a subsequent ack retransmission may not trigger a permit retransmission
in optimal time. This is irrelevant in practice as message loss is typically
very rare. It would be easy to make handling retransmissions optimal, but we
have refrained to do so to avoid cluttering an already intricate algorithm
with such non-essential details.

\subsection{Causal multicast}
\label{sec:multicast}

We now describe briefly a multicast version of the basic algorithm. For
presentation simplicity it lacks a few optimizations. We leave out a multicast
variant of the SPS-optimal algorithm, which can be obtained by a similar
reasoning.

A multicast is useful and cannot be simulated by using a unicast algorithm and
performing a multicast by sending a sequence of unicast messages. To
understand why, consider multicasting $m$ to two receivers $j$ and $k$.
Simulating it by sending two unicast messages, $m_1$ to $j$ and $m_2$ to $k$,
with the same content, will make a subsequent send $m_3$ by $j$ to $k$, after
delivering $m_1$, not causally dependent on $m_2$, allowing $m_3$ to be
delivered at $k$ before $m_2$. This is a wrong behavior if the idea is to
simulate the multicast of a single message $m$, in which case any message sent
after delivering $m$ must be delivered after $m$ at any common destination.

Algorithm~\ref{alg:multicast-basic} presents the differences from the basic
algorithm. The destination of a message is now a set of processes, and the
record Msg contains now an extra field ($\field{unack}$), not used for
messages in transit, which stores process ids for which acks are still missing.
A causal-send stores the process clock in the last-sent field, for all
destinations (lines 19--20). A network-send will also loop over the set of
receivers (lines 15--16) for the first time, or the set of receivers with acks
still missing (lines 42--43), if retransmitting.

\begin{algorithm2e*}[t]

\caption{Basic multicast causal delivery algorithm (differences from the basic algorithm).}
\label{alg:multicast-basic}
\newcommand\f[1]{\field{#1}}
\begin{multicols}{2}
  \types{
    $\Msg: \record$ \{ \;
    \quad $\f{rcv} : \pow{\ids}$, set of receivers \;
    \quad $\f{unack} : \pow{\ids}$, unacked receivers \;
    \quad \ldots \;
    $\}$, message record \;
  }
  \BlankLine

  \procedure({$\trySend()$}) {
    \While {$sb_i \neq []$} {
      $m = sb_i.\peek$ \;
      \If { 
        $p_i.\first < m.\f{per}$
      }
      {
        \return
      }
      $sb_i.\remove()$ \;
      $m.\f{per} \gets u_i.\size > 0 \lor \setsize{m.\f{rcv}} > 1$ \;
      $u_i.\add(m)$ \;
      \For {$j$ \kwin $m.\f{rcv}$} {
        $\send_i(j, \atom{msg}, m)$ \;
      }
    }
  }

  \BlankLine
  \on({$\causalSend_i(J, p)$}) {
    $m = \Msg \{ \f{rcv} : J, \f{unack} : J, \ldots \}$ \;
    \For {$j$ \kwin $J$} {
      $ls_i[j] \gets ck_i$ \;
    }
    $ck_i \gets ck_i + 1$ \;
    $sb_i.\add(m)$ \;
    $\trySend()$ \;
  }

  \BlankLine
  \on({$\receive_i(j, \atom{ack}, n)$}) {
    \If {$n < u_i.\first$} {
      $\send_i(j, \atom{permit}, n)$ \;
      \return
    }
    $u_i[n].\f{unack}.\af{remove}(j)$ \;
    \If {$u_i[n].\f{unack} = \emptyset$} {
      $u_i[n].\f{pl} \gets \bot$ \;
    }
    \If {$n = u_i.\first$} {
      \While {$u_i.\size > 0$} {
        $m = u_i.\peek$ \;
        \If {$m.\f{pl} \neq \bot$} {
          \return
        }
        \If {$m.\f{per}$} {
          \For {$j$ \kwin $m.\f{rcv}$} {
            $\send_i(j, \atom{permit}, m.\f{mid})$ \;
          }
        }
        $u_i.\remove()$ \;
      }
    }
  }

  \BlankLine
  \periodically() {
    \For {$m$ \kwin $u_i$} {
      \For {$j$ \kwin $m.\f{unack}$} {
        $\send_i(j, \atom{msg}, m)$ \;
      }
    }
    \For {$p$ \kwin $p_i$} {
      $\send_i(p.\f{snd}, \atom{ack}, p.\f{mid})$ \;
    }
  }

\end{multicols}
\end{algorithm2e*}

To enforce CSPS, a message $m$ multicast to more than one receiver will be
included in the messages that require knowledge about having been delivered.
This implies that they will require a permit, even if network-sent with no
acks from previously sent messages missing (line 13). It also means that, to
send the permit to the set of receivers, not only all acks of messages with
smaller message ids need to have been received, but also the acks of the
message itself.
(An optimization not shown in the algorithm would be being able to send the
permit when a single ack of the message itself is missing, to the corresponding
process.) 

This implies changing how acks are processed: receiving an ack
removes the sender from the $\field{unack}$ set (line 28); when it becomes
empty the payload is removed (lines 29--30); if the ack is for the first
message missing acks, the receivers for each message in a contiguous range
for which no ack is now missing will be sent the corresponding permit (lines
31--39).

\section{Future work}
\label{sec:future}

After designing the new approach and algorithm, an obvious future work is
performance evaluation.
This algorithm is very different from classic receiver-buffering approaches,
and there is no much point in it being used, or evaluated, at small scale. For
large scale systems, it obviously wins in message metadata and computation
overhead, and so the question is how delivery latency is impacted. But this impact
depends on the usage scenarios: communication patterns and how the time
handling messages compares with network latency. A proper evaluation will
depend on collecting and characterizing large scale realistic scenarios (e.g.,
involving message-based microservices).
Purely synthetic scenarios will produce results according to how parameters are
selected, and may be uninteresting, unless the parameters reflect realistic
situations. Either way, a proper evaluation is beyond the current work, and
will require a full paper,
similarly to how the classic KS~\cite{DBLP:journals/dc/KshemkalyaniS98}
algorithm was evaluated by~\textcite{DBLP:journals/tpds/ChandraGK04}.

Being the first hybrid-buffering approach, it opens the way for future
research on other such algorithms. The obvious motivation will be reducing
delivery latency. It will be interesting to see whether hybrid-buffering
algorithms can be devised which use receiver-buffering beyond just enforcing
FIFO, to allow possibly earlier sending, compared to what SPS allows, at the
cost of sending some metadata about causal predecessors. Such a
tradeoff could possibly be done at runtime, per message: e.g., sending some
metadata to control receiver-buffering, if the size is not prohibitive, but
falling back to delaying the network-send by SPS, otherwise.

\section{Conclusion}
\label{sec:conclusion}

Traditional topology-agnostic causal delivery algorithms are almost always
based on buffering at the receiver before delivering. They are not suitable
for large scale systems with thousands of processes, exhibiting a prohibitive
cost in the size of metadata. Sender-buffering algorithms have not been much
pursued, given the poor latency and throughput performance of the classic
sender-buffering algorithm by ~\textcite{DBLP:conf/dagstuhl/MatternF94}, until
a recent algorithm, Cykas~\cite{TongKuper2024}, which decouples delivery
from the ability to network-send subsequent messages.

In this paper, first we have introduced criteria for comparing causal
delivery algorithms that are not latency optimal and performed a
comparison of representative topology-agnostic algorithms. We have noticed
that throughput scalability is an important criteria to distinguish such
algorithms, and shown that no purely sender-buffering algorithm can achieve
scalable throughput, which can only be achieved by either receiver- or
hybrid-buffering.

Then, we have introduced the Sender Permission to Send (SPS) enforcement
strategy and a novel approach to achieve causal delivery, by enforcing FIFO
and SPS simultaneously, which we have shown to imply causal delivery.

We have presented the first hybrid-buffering causal delivery algorithm, in
three variants, based on the novel SPS+FIFO approach, using sender-buffering
to enforce SPS and receiver-buffering to enforce FIFO. It achieves optimal
metadata complexity ($O(1)$ integers/identifiers) for messages in transit.
Moreover, by a careful design, we have achieved optimal computation
complexity, i.e., amortized effectively constant time per message. As fas as
we know, it is the first topology-agnostic throughput-scalable algorithm to
achieve these properties.

The new algorithm is not latency optimal, but it is the only topology-agnostic
algorithm suitable for large scale systems, where the metadata cost
of classic algorithms is prohibitive, or in scenarios where latency is less
important than throughput and the overhead of processing each message should be
minimized. 

The hybrid-buffering approach opens the way for future algorithms with
different tradeoffs, improving latency by using receiver-buffering beyond
just enforcing FIFO, at the cost of sending some metadata about causal
predecessors.

\section*{Acknowledgments}

This work is financed by National Funds through the Portuguese funding agency,
FCT - Fundação para a Ciência e a Tecnologia, within project UID/50014/2025.
DOI: \href{https://doi.org/10.54499/UID/50014/2025}{10.54499/UID/50014/2025}

\printbibliography

\end{document}